\def\sT{\mathsf T}
\def\d{\mathrm d}
\def\J2{\mathsf J^2}
\newtheorem{theorem}{Theorem}
\newtheorem{lemma}[theorem]{Lemma}
\title{The globalization problem of the Hamilton-DeDonder-Weyl equations on a local $k$-symplectic framework}
\begin{document}

\author{O\u{g}ul Esen$^{\dagger}$, Manuel de Le\'on$^{\ddagger}$, Cristina
Sard\'on$^{*}$,  Marcin Zajac$^{**}$}
\date{}
\maketitle

\centerline{Department of Mathematics$^{\dagger}$,}
\centerline{Gebze Technical University, 41400 Gebze, Kocaeli, Turkey.}
\centerline{oesen@gtu.edu.tr}
\vskip 0.5cm

\centerline{Instituto de Ciencias Matem\'aticas, Campus Cantoblanco$^{\ddagger}$,}
\centerline{Consejo Superior de Investigaciones Cient\'ificas}
\centerline{and}
\centerline{
Real Academia Espa{\~n}ola de las Ciencias.}
\centerline{C/ Nicol\'as Cabrera, 13--15, 28049, Madrid, Spain.}
\centerline{mdeleon@icmat.es}
\vskip 0.5cm

\centerline{Instituto de Ciencias Matem\'aticas, Campus Cantoblanco$^{*}$,}
\centerline{Consejo Superior de Investigaciones Cient\'ificas.}
\centerline{C/ Nicol\'as Cabrera, 13--15, 28049, Madrid, Spain.}
\centerline{cristinasardon@icmat.es}
\vskip 0.5cm

\centerline{Department of Mathematical Methods in Physics$^{**}$,}
\centerline{Faculty of Physics. University of Warsaw,}
\centerline{ul. Pasteura 5, 02-093 Warsaw, Poland.}
\centerline{
marcin.zajac@fuw.edu.pl}

\begin{abstract}
In this paper we aim at addressing the globalization problem of Hamilton-DeDonder-Weyl equations on a local $k$-symplectic framework and we introduce the notion of {\it locally conformal $k$-symplectic (l.c.k-s.) manifolds}. This formalism describes the dynamical properties of physical systems that locally behave like multi-Hamiltonian systems. Here, we describe the local Hamiltonian properties of such systems, but we also provide a global outlook by introducing the global Lee one-form approach. In particular, the dynamics will be depicted with the aid of the Hamilton--Jacobi equation, which is specifically proposed in a l.c.k-s manifold.
%In this article we provide a Hamilton-Jacobi formalism in locally conformal $k$-symplectic manifolds. Our interest in the Hamilton-Jacobi theory comes from the suitability of this theory as an integration method for dynamical systems, whilst our interest in $k$-symplectic structures roots in its simplicity as preliminary description of field theories.
%The locally conformal character will account for field theories described by Hamiltonians defined on well-behaved line bundles, describing the dynamics in open subsets of the general manifold.
\end{abstract}

\tableofcontents

\section{Introduction}

In this work, we address the globalization problem of local Hamiltonian-De Donder-Weyl (HDW) dynamics on $k$-symplectic spaces, and we propose the associated Hamilton-Jacobi theory on $k$-locally conformal symplectic manifolds. In the classical symplectic framework, the glueing problem of local Hamiltonian dynamics was investigated by Vaismann \cite{vaisman} in terms of locally conformal symplectic (l.c.s.) manifolds, being the latter introduced by Lee in \cite{Hwa}. 
Our interest in this work is to generalize this approach to the case of $k$-symplectic manifolds. In that direction, 
we shall propose a new geometric structure, which we will call a locally conformal $k$-symplectic (l.c.k-s.) structure, to provide a globalization of the local HDW equations. Further, we shall propose a Hamilton-Jacobi formalism for the dynamics in such manifold. Let us now recall some introductory concepts to follow along the present work and to state our goal more explicitly.

\subsection{Locally conformal symplectic manifolds}
Consider an even dimensional manifold $M$ equipped with a non-degenerate two-from $\omega$. The manifold $M$ is called a locally conformal symplectic (l.c.s.) if $M$ has an open cover $\{U_\alpha\}$ and there exists a family of smooth functions $\sigma_\alpha$ on each chart such that $d(e^{-\sigma_\alpha}\omega\vert_\alpha)$ vanishes identically \cite{Banzoni,vaisman}. Here, $\omega\vert_\alpha$ denotes the restriction of the two-form $\omega$ to the chart $U_\alpha$. In this picture, the local two-form defined by $\omega_\alpha:=e^{-\sigma_\alpha}\omega\vert_\alpha$ turns out to be symplectic and the pair $(U_\alpha,\omega_\alpha)$ becomes a symplectic manifold.

Referring to this symplectic framework, for a local Hamiltonian function $h_\alpha$, the local geometric Hamilton equation is
\begin{equation} \label{locHE}
\iota_{X_\alpha}\omega_\alpha=dh_\alpha.
\end{equation} 

Notice that the Hamiltonian vector fields $X_\alpha$ are still the same even if we multiply the local Hamilton equation \eqref{locHE} by scalars $\lambda_{\beta\alpha}$, in overlapping local charts $U_\alpha$ and $U_\beta$. These scalars can be written as $\lambda_{\beta\alpha}=e^{\sigma_\alpha-\sigma_\beta}$ for local functions $\sigma_\alpha$'s. This leads to the determination of a line bundle $L$ over the manifold $M$. Accordingly, we can glue up the local Hamiltonian functions according to the following transition relation   
\begin{equation} \label{coc-h}
h_\beta=e^{\sigma_\alpha-\sigma_\beta}h_\alpha. 
\end{equation}
This family defines a line bundle valued function $\tilde{h}$ on $M$. Further, we define local functions $h\vert_{\alpha}:=e^{\sigma_\alpha}h_\alpha$ which can be considered as a local picture of a real valued global Hamiltonian function $h$. 
Let us now discuss how we can formulate geometrically the dynamics generated by $h$, not necessarily relying on coordinates. For this, we address an alternative definition of l.c.s. manifolds. 
\bigskip

%\noindent

\noindent \textbf{Lee form.} An equivalent definition of a l.c.s. manifold is possible with the aid of
a compatible one form by glueing up the exterior derivatives $d\sigma_\alpha$ of the local functions $\sigma_\alpha$ to a well defined closed one-form $\theta$, called the Lee one-form \cite{Hwa}.  This observation leads us to an alternative definition of l.c.s. manifolds in terms of global differential forms. An even dimensional manifold $M$ endowed with a non-degenerate two-form $\omega$ is called a l.c.s. manifold if there exists a globally defined one-form $\theta$ such that
\begin{equation}\label{lcs1}
d\omega=\theta\wedge\omega , \quad d\theta=0.
\end{equation}
According to this definition, we denote a l.c.s. manifold by a triple $(M,\omega,\theta)$ where $\omega$ is an almost symplectic two-form and $\theta$ is a Lee form satisfying (\ref{lcs1}). This realization of locally conformal symplectic manifolds implies that a l.c.s. manifold is symplectic if and only if the Lee form $\theta$ vanishes identically. 
\bigskip

\noindent \textbf{Lichnerowicz-de Rham differential.} Let $\theta$ be a closed one-form on a manifold $M$. The Lichnerowicz-de Rham differential $d_\theta$ is defined on the space of differential forms $\Lambda(M)$ 
\begin{equation} \label{LdR}
d_\theta: \Lambda^k(M) \rightarrow \Lambda^{k+1}(M) : \beta \mapsto d\beta-\theta\wedge\beta,
\end{equation}
where $d$ denotes the exterior (de Rham) derivative \cite{GuLi84, HaRy99}. A triple $(M,\omega,\theta)$ involving a non-degenerate two-form $\omega$ and a closed one-form $\theta$ is a l.c.s. manifold if and only if $d_\theta\omega=0$. Recall that if we employ the local equation of dynamics in (\ref{locHE}), and we use the definitions of $\omega_\alpha$ and $h_\alpha$, then, glueing up the local one-forms $d\sigma_\alpha$ to the global Lee one-form $\theta$, we arrive at
\begin{equation} \label{HamEq-lcs}
\iota_{X_h}\omega=d_\theta h.
\end{equation}
Here, the Hamiltonian vector field $X_h$ is the global realization of the local vector fields $X_\alpha$ in (\ref{locHE}). 

\bigskip

\noindent \textbf{L.c.s. structures on cotangent bundles.} Consider the canonical symplectic manifold $T^*Q$ equipped with the canonical symplectic two-form $\Omega_Q=-d\Theta_Q$ where $\Theta_Q$ is the canonical one-form on $T^*Q$ \cite{AbMa78,Arnold,LeRo89}. Let $\vartheta$  be a closed form on $Q$ and pull it back to $T^*Q$ by means of the cotangent bundle projection $\pi_Q$. This gives us a closed semi-basic one-form $\theta=\pi_Q^*(\vartheta)$. By means of the Lichnerowicz-deRham differential, we define a two-form 
\begin{equation} \label{omega_theta}
\Omega_\theta=-d_\theta(\Theta_Q)= -d\Theta_Q+\theta\wedge \Theta_Q=\Omega_Q+\theta\wedge \Theta_Q
\end{equation}
on the cotangent bundle $T^*Q$.  Notice that the triple 
\begin{equation} \label{T*_Q}
T^*_\theta Q:=(T^*Q,\Omega_\theta,\theta), \qquad \Omega_\theta=-\d_\theta\Theta_Q
\end{equation} 
determines a l.c.s. manifold with the Lee-form $\theta$, see \cite{ChMu17,OtSt15}. In short, we denote this l.c.s. manifold by $T^*_\theta Q$. Note that all l.c.s. manifolds are locally $T^*_\theta Q$ for some $Q$ and for a closed one-form $\vartheta$.

\subsection{Geometric Hamilton-Jacobi theory}

Consider a Hamiltonian system $(T^*Q,\Omega_Q,h)$ generated by a Hamiltonian function $h$. In the Hamilton-Jacobi theory one aims at finding a function $W$, the so-called generating function, satisfying the Hamilton-Jacobi equation (HJE)
\begin{equation}\label{HJeq1}
h\left(q^i,\frac{\partial W}{\partial q^i}\right)=\epsilon,
\end{equation}
where $\epsilon$ is a constant \cite{Arnold}. The geometrical picture is the following \cite{CaGrMaMaMuRo06}. Let $X_{h}$ be the Hamiltonian vector field for the Hamiltonian function $h$ then, consider the following projected vector field 
\begin{equation}
 X_h^{dW}=T\pi_Q\circ X_h\circ dW,
\end{equation} 
on $Q$. Here, $T{\pi}_Q$ is the tangent mapping of the cotangent bundle projection $\pi_Q$. The following theorem asserts that an integral curve of $X_{h}^{dW}$ can be lifted to an integral curve of $X_{h}$ if and only if $W$ is a solution of \eqref{HJeq1}:
\begin{theorem}
\label{HJT-Thm} For a closed one-form $dW$ on $Q$ the following conditions are
equivalent:
\begin{enumerate}
\item The vector fields $X_{h}$ and $X_{h}^{dW}$ are $dW$-related.
\item $d\left( h\circ dW \right)=0.$
\end{enumerate}
\end{theorem}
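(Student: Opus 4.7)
My strategy is to restate the first condition as the vanishing of a vertical correction vector field along the section $dW\colon Q\to T^{*}Q$, and then to identify this vertical correction with $d(h\circ dW)$ by pulling back the Hamilton equation and exploiting the fact that the image of $dW$ is a Lagrangian submanifold of $(T^{*}Q,\Omega_{Q})$.

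First I would observe that, since $\pi_{Q}\circ dW=\mathrm{id}_{Q}$, the vector field along $dW$ defined by
\begin{equation}
Z:=X_{h}\circ dW - T(dW)\circ X_{h}^{dW}
\end{equation}
automatically satisfies $T\pi_{Q}\circ Z=0$, so $Z$ is vertical. Consequently, the statement that $X_{h}$ and $X_{h}^{dW}$ are $dW$-related is literally the statement $Z\equiv 0$, and the theorem reduces to proving the equivalence $Z\equiv 0\Longleftrightarrow d(h\circ dW)=0$.

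Next I would pull back the Hamilton equation $\iota_{X_{h}}\Omega_{Q}=dh$ along $dW$. For any $v\in T_{q}Q$,
\begin{equation}
d(h\circ dW)(v)=dh\bigl(T(dW)v\bigr)=\Omega_{Q}\bigl(X_{h}(dW(q)),\,T(dW)v\bigr).
\end{equation}
Substituting the decomposition $X_{h}\circ dW=T(dW)\circ X_{h}^{dW}+Z$, the horizontal contribution equals $(dW)^{*}\Omega_{Q}(X_{h}^{dW}(q),v)$, and this vanishes because $dW$ is closed: indeed
\begin{equation}
(dW)^{*}\Omega_{Q}=-(dW)^{*}d\Theta_{Q}=-d\bigl((dW)^{*}\Theta_{Q}\bigr)=-d(dW)=0,
\end{equation}
i.e.\ $\mathrm{Im}(dW)$ is a Lagrangian submanifold of $(T^{*}Q,\Omega_{Q})$. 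Hence $d(h\circ dW)(v)=\Omega_{Q}(Z,T(dW)v)$.

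Finally, I would invoke the non-degenerate duality between the vertical subspace of $T_{dW(q)}(T^{*}Q)$ and the Lagrangian subspace $T(dW)(T_{q}Q)$ induced by $\Omega_{Q}$: the linear functional $v\mapsto \Omega_{Q}(Z,T(dW)v)$ on $T_{q}Q$ vanishes identically if and only if $Z=0$. This closes the loop and yields $d(h\circ dW)=0\Longleftrightarrow Z\equiv 0\Longleftrightarrow X_{h}\text{ and }X_{h}^{dW}\text{ are }dW\text{-related}$. The only delicate point is the bookkeeping in the second step, where one must ensure that the pull-back manipulation genuinely isolates the vertical part; once the Lagrangian identity $(dW)^{*}\Omega_{Q}=0$ is exploited, the rest of the argument is mechanical.
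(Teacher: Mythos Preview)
Your proof is correct. The paper does not actually prove Theorem~\ref{HJT-Thm}; it is quoted as background from \cite{CaGrMaMaMuRo06}. However, your argument is precisely the $k=1$ specialization of the coordinate-free technique the paper employs for the $k$-symplectic and l.c.k-s.\ generalizations (Theorems~\ref{thm-k-symp-HJ} and~\ref{HJ-l.c.k-s.-1}): one defines the vertical defect $\mathbf{D}$ (your $Z$), pulls back the Hamilton equation along the section, kills the horizontal contribution via the Lagrangian identity $(dW)^{*}\Omega_{Q}=0$, and then concludes from the complementarity of the vertical subspace and the tangent space to the image of the section (your non-degenerate pairing is the $k=1$ case of the decomposition~\eqref{decomp} that the paper invokes).
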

\bigskip

\noindent \textbf{A Hamilton-Jacobi theory in l.c.s. manifolds.} Recently, the authors of this same paper have proposed a Hamilton--Jacobi theory on a l.c.s. manifold \cite{EsLeSaZa19}. Let us summarize here the results we obtained. Consider a Hamiltonian vector field $X_h$ defined through the equation (\ref{HamEq-lcs}). Consider now a section $\gamma$ of the cotangent bundle and define a vector field on $Q$ as
\begin{equation} \label{X_H-gamma}
 X_{h}^{\gamma}=T\pi\circ X_{h}\circ\gamma.
\end{equation}
One can define the vector field $X_{h}^{\gamma}$ through the following commutative diagram
\[
\xymatrix{ T_{\theta}^*Q
\ar[dd]^{\pi_Q} \ar[rrr]^{X_{h}}&   & &TT_{\theta}^*Q  \ar[dd]^{T\pi_Q}\\
  &  & &\\
Q \ar@/^2pc/[uu]^{\gamma}\ar[rrr]^{X_{h}^{\gamma}}&  & & TQ}
\]
Now, the Hamilton-Jacobi theorem for l.c.s. cotangent bundles reads as follows. 
\begin{theorem}
\label{HJT-Thm-lcs} Consider a one-form $\gamma$ whose image is a Lagrangian submanifold of the locally conformal symplectic manifold $T^*_\theta Q$ with respect to the almost symplectic two-form $\Omega_\theta$, that is $d_\vartheta \gamma=0$. Then, the following  conditions are equivalent:
\begin{enumerate}
\item The vector fields $X_{h}$ and $X_{h}^{\gamma}$ are $\gamma$-related,
\item $ d_\vartheta (h\circ\gamma)=0.$
\end{enumerate}
\end{theorem}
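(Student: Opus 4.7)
The plan is to pull back the Hamilton equation $\iota_{X_h}\Omega_\theta = d_\theta h$ along the section $\gamma: Q \to T^*_\theta Q$ and compare the two resulting one-forms on $Q$. The key bookkeeping lemma is the identity $\gamma^*\Theta_Q = \gamma$ for the tautological one-form, together with $\gamma^*\theta = \vartheta$ (which holds because $\theta = \pi_Q^*\vartheta$ and $\pi_Q\circ\gamma = \mathrm{id}_Q$). Combining these with the formula \eqref{omega_theta} yields
\begin{equation*}
\gamma^*\Omega_\theta = -d(\gamma^*\Theta_Q) + \gamma^*\theta \wedge \gamma^*\Theta_Q = -(d\gamma - \vartheta\wedge\gamma) = -d_\vartheta\gamma,
\end{equation*}
so the Lagrangian hypothesis $d_\vartheta\gamma=0$ is the same as $\gamma^*\Omega_\theta = 0$. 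Similarly, since $d_\theta h = dh - h\theta$, the pullback of the right-hand side is $\gamma^*(d_\theta h) = d(h\circ\gamma) - (h\circ\gamma)\vartheta = d_\vartheta(h\circ\gamma)$.

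Next, I would decompose the restriction of the Hamiltonian vector field to the image of $\gamma$ as
\begin{equation*}
X_h\circ\gamma = T\gamma\circ X_h^\gamma + V,
\end{equation*}
where $V$ is the vertical correction; this decomposition is well-defined because $T\pi_Q\circ T\gamma = \mathrm{id}_{TQ}$, so that $V := X_h\circ\gamma - T\gamma\circ X_h^\gamma$ is $\pi_Q$-vertical by construction. Pulling back $\iota_{X_h}\Omega_\theta$ along $\gamma$ and inserting this decomposition, the $T\gamma(X_h^\gamma)$ contribution is killed by the Lagrangian condition, leaving
\begin{equation*}
\gamma^*(\iota_{X_h}\Omega_\theta)(W) = \Omega_\theta\bigl(V,\,T\gamma(W)\bigr)\qquad \text{for every } W\in TQ.
\end{equation*}
Putting the two pullbacks together, $\gamma$ and $h$ satisfy the equation
\begin{equation*}
\Omega_\theta\bigl(V,\,T\gamma(\cdot)\bigr) = d_\vartheta(h\circ\gamma).
\end{equation*}

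From here both implications become transparent. If (1) holds then $V=0$, so the left-hand side vanishes and (2) follows. Conversely, assuming (2), we obtain $\Omega_\theta(V,T\gamma(W))=0$ for all $W$. To conclude $V=0$ I still need to control $\Omega_\theta(V,V')$ for vertical $V'$, which is where the structure of $\Omega_\theta$ on the cotangent bundle enters: $\Omega_Q = -d\Theta_Q$ is Lagrangian on fibres, and the correction $\theta\wedge\Theta_Q$ also vanishes on pairs of vertical vectors because both $\theta$ and $\Theta_Q$ are semibasic. Hence the vertical distribution is Lagrangian for $\Omega_\theta$, and together with $T\gamma(TQ)$ it spans $T(T^*Q)$ pointwise along $\gamma(Q)$. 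Non-degeneracy of $\Omega_\theta$ then forces $V=0$, i.e.\ $X_h\circ\gamma = T\gamma\circ X_h^\gamma$, which is (1).

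The main obstacle I anticipate is precisely this last non-degeneracy step: one needs the (easy but not entirely obvious) observation that the Lee-twisted two-form $\Omega_\theta$ still makes the fibres of $\pi_Q$ Lagrangian, so that the vanishing of $\Omega_\theta(V,\,\cdot\,)$ on the horizontal complement is upgraded to global vanishing. Everything else is a direct translation of the classical cotangent-bundle proof, with the ordinary exterior derivative replaced everywhere by $d_\theta$ (upstairs) and $d_\vartheta$ (downstairs).
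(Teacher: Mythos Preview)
Your proof is correct and follows essentially the same route as the paper. Note that Theorem~\ref{HJT-Thm-lcs} itself is only stated here (it is quoted from \cite{EsLeSaZa19}), but the paper's proof of its $k$-symplectic generalization, Theorem~\ref{HJ-l.c.k-s.-1}, specializes to exactly your argument when $k=1$: the identities $\gamma^*\Omega_\theta=-d_\vartheta\gamma$ and $\gamma^*(d_\theta h)=d_\vartheta(h\circ\gamma)$ appear as \eqref{LdR-Lag} and \eqref{Calc-d-Ld}; the vertical defect $V=X_h\circ\gamma-T\gamma(X_h^\gamma)$ is the paper's $\mathbf{D}$; and the closing non-degeneracy step---fibres are Lagrangian for $\Omega_\theta$ because the correction $\theta\wedge\Theta_Q$ is a wedge of semibasic forms, together with the splitting $T_{\gamma(q)}T^*Q=T\gamma(T_qQ)\oplus V_{\gamma(q)}\pi_Q$---is precisely how the paper finishes $(3)\Rightarrow(2)$ via the decomposition \eqref{decomp-k}. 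The only cosmetic difference is that the paper packages the pull-back computation through Lemma~\ref{commute}, whereas you do it by hand; for $k=1$ the two are identical since $\ker\flat=0$.
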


\subsection{The goal and the contents}

The aim of this work is two-fold. The first aim, rooting in Vaisman's discussion in \cite{vaisman}, is to address the globalization problem of local HDW equations on local $k$-symplectic spaces. Such discussion will lead us to introduce the notion of what we call henceforth locally conformal $k$-symplectic (l.c.k-s.) manifolds. Our second goal is to present a geometric Hamilton-Jacobi formalism on l.c.k-s. manifolds. This work can be considered as a continuation or complementary to \cite{EsLeSaZa19}. 
\bigskip

\noindent This paper contains three main sections. In the first section, we recall the notion of $k$-symplectic manifolds, HDW dynamics, and the Hamilton-Jacobi theorem on $k$-symplectic manifolds presented in \cite{LeDiMaSaVi10}. We present a coordinate free proof of this Hamilton-Jacobi theorem. The next section \ref{Sec-lcks} concerns the introduction of our novel concept, here presented as locally conformal $k$-symplectic manifolds. We formulate Hamiltonian dynamics on l.c.k-s. manifolds along with a Hamilton-Jacobi theorem, which constitutes our second main result in this paper. In the last section \ref{Exp}, we present a non-trivial example to illustrate the previous theoretical discussion.  All manifolds are considered to be real, paracompact, connected and $C^\infty$. All
 maps are $C^\infty$. Sum over crossed repeated indices is understood.

\section{k-symplectic framework} \label{Sec-k-symp}

\subsection{k-tangent bundle}

Let $M$ be an $n$-dimensional manifold. The tangent bundle $(TM,\tau_M,M)$ is a $2n$-dimensional manifold. Here, $\tau_M$ is the tangent bundle projection. A section of the tangent bundle is a vector field $X$ on $M$. The space of smooth vector fields $\mathfrak{X}(M)$ on $M$ has a module structure on the space of smooth functions on $M$. 
\bigskip

\noindent The Whitney sum of $k$ copies of tangent bundles
\begin{equation}\label{TkM}
T^k M:=TM\oplus_M \dots \oplus_M TM,
\end{equation}
is a $nk+n$ dimensional manifold, and the triple $(T^k M,\boldsymbol{\tau}_M,M)$ is a fiber bundle \cite{LSV}. Here, $\boldsymbol{\tau}_M$ is the projection from $T^k M$ to its base manifold $M$. Note that we reserve the bold symbol $\boldsymbol{\tau}_M$ in order to differentiate this projection from the tangent bundle projection $\tau_M$. We refer to  $T^k M$ as the $k$-tangent bundle of $M$. There exists a family of projections $\tau_\kappa$ (here, $\kappa$ runs from $1$ to $k$) from the $k$-tangent bundle $T^k M$ to the tangent bundle $TM$ which is assumed to be the $\kappa^{th}$-copy $TM$ in \eqref{TkM}. A section of $k$-tangent bundle is a $k$-vector field ${\bf X}$ on $M$. We denote the space of $k$-vector fields on $M$ by $\mathfrak{X}^k(M)$. A $k$-vector field ${\bf X}$ is equivalent to
a family of $k$ vector fields on $M$, accordingly, we denote a $k$-vector field by 
\begin{equation} \label{X-decomp}
{\bf X}:=[X_\kappa]=(X_1, \ldots, X_k).
\end{equation}
Referring to the projections $\tau_\kappa$ we can determine the relation
\begin{equation}
\tau_\kappa \circ {\bf X}=X_\kappa, \qquad \forall \kappa=1,\dots,k.
\end{equation}
According to this, we present the following commutative diagram.
\begin{equation}
\xymatrix{ T^k M
\ar[ddr]^{\boldsymbol{\tau}_M} \ar[rr]^{\tau_\kappa}&& TM  \ar[ddl]_{\tau_M}\\
  \\
&M \ar@/^1pc/[uul]^{\mathbf{X}} \ar@/_1pc/[uur]_{X_\kappa}}
\end{equation}

\noindent\textbf{Integrability of $k$-vector fields.} We denote the space of $k$-vector fields on $M$ by $\mathfrak{X}^k(M)$. As in the classical picture, the integrability of $k$-vector fields can also be discussed equivalently. For this, let $U$ be an open neighborhood containing the origin $\mathbf{0}$ in a $k$-dimensional Euclidean space $\mathbb{R}^k$ endowed with coordinates $\mathbf{t}=(t^\kappa)$. An integral section of a $k$-vector field ${\bf X}$
passing through a point $x$ is a differential map
  $\boldsymbol{\phi}$ from $U$ to the manifold $M$ satisfying that 
\begin{equation}
\boldsymbol{\phi}(\mathbf{0})=x, \qquad 
\boldsymbol{\phi}_{*} \left(\frac{\displaystyle\partial}{\partial
t^\kappa}\right) = X_{\kappa}
 \end{equation}
for all $\kappa$ from $1$ to $k$. Here, the notation $\boldsymbol{\phi}_{*}$ refers to the push-forward operation. 
A $k$-vector field is integrable if there is an integral
section passing through every point of $M$.
\bigskip

\noindent\textbf{$k$-tangent lift.} Consider a differentiable mapping $\varphi$ from a manifold $M$ to another manifold $N$. Then $k$-tangent lift of $\varphi$ is a mapping from the $k$-tangent bundle $T^kM$ to the $k$-tangent bundle $T^kN$, and it is defined to be
\begin{equation} \label{k-lift}
T^k\varphi[X_\kappa(z)]=T^k\varphi (X_1(z),\dots,X_k(z))=(T\varphi(X_1(z)),\dots,T\varphi(X_k(z)) )   
\end{equation}
where $T\varphi$ is the tangent lift of $\varphi$. Here, $X_\kappa(z)$ is an element of $T_zM$, and $[X_\kappa(z)]$ is an element of $T_z^kM$. Notice that the $k$-tangent lift of $\varphi$ satisfies the following commutation rule 
\begin{equation}
   \boldsymbol{\tau}_N \circ T^k\varphi = \varphi \circ  \boldsymbol{\tau}_M
\end{equation}
where $\boldsymbol{\tau}$ is the $k$-tangent bundle projection. Let us provide another  commutative diagram summarizing this:
\begin{equation} \label{diagramm}
\xymatrix{ T^k M \ar[rrrr]^{T^k\varphi}
\ar[dddr]^{\boldsymbol{\tau}_M} \ar[rrd]^{\tau_\kappa}&& && T^kN \ar[rrd]^{\tau_\kappa} \ar[dddr]^{\boldsymbol{\tau}_N} \\&& TM\ar[ddl]_{\tau_M} \ar@{.>}[rrrr]^{T\varphi \qquad  \qquad}  && && TN\ar[ddl]_{\tau_N}
  \\ 
  \\
&M \ar[rrrr]^{\varphi} \ar@/^1pc/[uuul]^{\mathbf{X}} \ar@/_1pc/[uur]_{X^\kappa}&& && N \ar@/_1pc/[uur]_{T\varphi \circ X^\kappa} \ar@/^1pc/[uuul]^{(\varphi^k)_* \mathbf{X}}}
\end{equation}
Note that, the commutation of this diagram additionally enables us to define the $k$-th push forward, denoted by $(\varphi^k)_* \mathbf{X}$, of a $k$-vector field $\mathbf{X}$ by means of the differentiable mapping $\varphi$. Accordingly, the relationship between two $k$-vector fields is defined to be
\begin{equation}
(\varphi^k)_* \mathbf{X} \circ \varphi =T^k \varphi \circ \mathbf{X}.
\end{equation}

\subsection{k-symplectic manifolds}

Consider a $n$ dimensional manifold $M$ with $k$ number of closed two-forms $[\Omega^{\kappa}]$, and $nk$-dimensional
integrable distribution $V$. A triple $(M,[\Omega^{\kappa}],V)$ is called  a $k$-symplectic manifold if the following two conditions are satisfied 
\begin{equation}\label{condksymp}
(i) ~ \Omega^{\kappa}\vert_{ V\times V}=0,~ \forall \kappa=1,\dots k, \qquad
 (ii) ~  \bigcap_{\kappa=1}^{k} \ker\Omega^{\kappa}=\{0\}.
\end{equation}
See \cite{aw,aw2,LeonMeSa88a,LeonMeSa88b,LeonMeSa91,LeonMeSa93} for fundamentals on $k$-symplectic manifolds. The canonical model for a $k$-symplectic manifold is the $k$-cotangent bundle of a manifold. Let us depict this in detail.
\bigskip

\noindent \textbf{Lagrangian submanifolds.} Consider a submanifold $N$ of a $k$-symplectic manifold $(M,[\Omega^{\kappa}],V)$. Then $k$-symplectic dual of the tangent space $T_zN$ is defined as follows 
\begin{equation} \label{sdual}
(T_zN)^\perp=\left \{v\in T_zM:\Omega^1(v,w)=\dots=\Omega^k(v,w)=0, \quad \forall w \in T_zN  \right\}.
\end{equation}
A submanifold $N$ is called isotropic if $T_zN\subset(T_zN)^\perp$ for all $z$ in $N$ \cite{LSV,LV}. $N$ is called coisotropic if $(T_zN)^\perp\subset T_zN$ whereas it is called as Lagrangian if $(T_zN)^\perp= T_zN$, for all $z$. It is important to observe that the submanifold of $M$ integrating the distribution $V$ is a Lagrangian submanifold of $M$. We shall characterize the Lagrangian submanifolds more concretely in the generic example of $k$-symplectic manifolds defined in the upcoming paragraph.
 \bigskip

\noindent \textbf{$k$-cotangent bundle.}  
The $k$-cotangent bundle of a manifold $Q$ is defined to be the Whitney sum of $k$ copies of the cotangent bundle
$T^*Q$ as
\begin{equation} \label{Tk*Q}
T_k^*Q:=T^*Q\oplus_Q\dots\oplus_QT^*Q
\end{equation}
equipped with the canonical projection $\boldsymbol{\pi}_Q $ from $T_k^*Q$ to the base $Q$. Additionally, there exists a family of projections $\pi^\kappa$ from $T_k^*Q$ to the $\kappa^{th}$-copy $T^*Q$ in the decomposition \eqref{Tk*Q}. We define a family of closed two-forms on $T_k^*Q$ by pulling the canonical symplectic two-forms on each $T^*Q$ back to $T_k^*Q$ by means of $\pi^{\kappa}$. We denote the set of such two-forms by $[\Omega^{\kappa}]$. Consider the kernel $V$ of the tangent lift $T\boldsymbol{\pi}_Q$ of the projection. It is an integrable distribution.
Notice that the set 
\begin{equation}
(T_k^*Q,[\Omega^{\kappa}],V)
\end{equation}
determines a $k$-symplectic manifold by satisfying both conditions presented in (\ref{condksymp}). 
\bigskip

\noindent
In Darboux' coordinates $(q^i,p^\kappa_i)$ on $T_k^*Q$, the closed two-forms and the  distribution $V$ are computed to be
\begin{equation} \label{k-symp-s-loc}
\Omega^{\kappa}=  dq^i\wedge dp^\kappa_{i}, \qquad
V=\left\langle\frac{\partial}{\partial p^1_i}, \dots,
\frac{\partial}{\partial p^k_i}\right\rangle.
\end{equation}

It is important to note that this local realization is generic for all $k$-symplectic manifolds. That is, any k-symplectic manifold, at every
point of the manifold, admits a local coordinate system $(q^i,p^\kappa_i)$  such that the closed two-forms and the distribution have the local expressions in \eqref{k-symp-s-loc}, see \cite{aw,LeonMeSa93}.
\bigskip

\noindent \textbf{Sections of $k$-cotangent bundle.} 
A section of a $k$-cotangent bundle is a differentiable mapping $\boldsymbol{\gamma}$ from the base manifold $Q$ to the total space $T_k^*Q$ satisfying that $\boldsymbol{\pi}_Q\circ \boldsymbol{\gamma}=\text{id}$. A decomposition of $\boldsymbol{\gamma}$ is 
\begin{equation}
\boldsymbol{\gamma}=[\gamma^{\kappa}]=(\gamma^1,\dots,\gamma^k)
\end{equation}
where each of $\gamma^{\kappa}$ is a differential one-form on $Q$. Notice that, the relationships between $\boldsymbol{\gamma}$ and the one-forms $[\gamma^{\kappa}]$ are given by
\begin{equation} \label{Gamma-gamma-a}
\pi^\kappa\circ \boldsymbol{\gamma}=\gamma^\kappa,  \qquad \forall \kappa=1,\dots,k.
\end{equation}
The following commutative diagram summarizes the discussions
\begin{equation} \label{pi-a}
\xymatrix{ T_k^*Q
\ar[ddr]^{\boldsymbol{\pi}_Q} \ar[rr]^{\pi^{\kappa}}&& T^*Q \ar[ddl]_{\pi_Q}\\
  \\
&Q \ar@/^1pc/[uul]^{\boldsymbol{\gamma}} \ar@/_1pc/[uur]_{\gamma^{\kappa}}}
\end{equation}
A section $\boldsymbol{\gamma}$ of $\boldsymbol{\pi}_M$ is called a closed section if all of the constitutive differential one-forms are closed. 
We perform the following direct calculation
\begin{equation}\label{zero}
\boldsymbol{\gamma}^*\Omega^{\kappa}=\boldsymbol{\gamma}^*(\pi^{\kappa})^*\Omega_Q=(\pi^{\kappa}\circ \boldsymbol{\gamma})^*\Omega_Q=(\gamma^{\kappa})^*\Omega_Q=-\d\gamma^{\kappa}  
\end{equation}
where we have employed  (\ref{Gamma-gamma-a}) in the third identity. From (\ref{zero}) we can see that $\boldsymbol{\gamma}^*\Omega^{\kappa}$ vanishes identically if and only if each $\gamma^{\kappa}$ is closed. This implies that the image space of a section $\boldsymbol{\gamma}$ of the $k$-cotangent bundle is a Lagrangian submanifold if and only if it is closed.

\subsection{Dynamics on k-symplectic manifolds}

Consider the canonical $k$-symplectic manifold
$(T_k^*Q,[\Omega^{\kappa}],V)$ given in \eqref{Tk*Q}. Let us first introduce some notations. Since $T_k^*Q$ is a manifold, one can define its tangent bundle and cotangent bundle as usual $TT_k^*Q$ and $T^*T_k^*Q$. A vector field on $T_k^*Q$ takes values in the tangent bundle $TT_k^*Q$ whereas a one-form on $T_k^*Q$ takes values in the cotangent bundle $T^*T_k^*Q$. Accordingly, we denote the space of vector fields by $\mathfrak{X}(T_k^*Q)$, and the space of one-forms by $\Lambda^1(T_{k}^{*} Q)$. Further, we can define the $k$-tangent bundle $T^kT_k^*Q$ of the $k$-cotangent bundle $T_k^*Q$. Sections of this fibration are $k$-vector fields on $T_k^*Q$. Recalling (\ref{X-decomp}), we define the following musical mapping 
\begin{equation} \label{flat}
\flat : \mathfrak{X}^k(T_k^*Q)  \longrightarrow \Lambda^1(T_{k}^{*} Q)  : \mathbf{X}=\left(X_{1}, \ldots, X_{k}\right) \mapsto \sum_{\kappa=1}^k \iota_{X_{\kappa}} \Omega^{\kappa}.
\end{equation}
If $k=1$, one arrives at the classical symplectic framework where the musical mapping becomes a symplectic isomorphism. But, for $k>1$, the kernel of the musical mapping is far from being trivial, so it fails to be an isomorphism. 

\bigskip 

\noindent \textbf{Dynamics on the $k$-cotangent bundle.} Start with a $k$-cotangent bundle $(T_k^*Q,[\Omega^{\kappa}],V)$. Consider a smooth Hamiltonian function $H$ defined on $T_k^*Q$. We define the Hamilton-DeDonder-Weyl (HDW) equation  as 
\begin{equation} \label{HamEq-k}
\flat(\mathbf{X})=dH,
\end{equation}
where $\flat$ is the operator defined in (\ref{flat}). Notice that, in this framework, HDW equation determines a $k$-vector field $\mathbf{X}$ on $T_k^*Q$. We call $\mathbf{X}$ a  HDW $k$-vector field. Let us write the HDW equation more explicitly by endowing the decomposition in (\ref{X-decomp}), i.e.,
\begin{equation}
\sum_{\kappa=1}^k\iota_{X_\kappa}\Omega^{\kappa}=d H.
 \label{generic0}
\end{equation}
It is important to note that it is not possible to determine a unique HDW $k$-vector field satisfying the HDW equation (\ref{HamEq-k}). Let us clarify this in local coordinates.
\bigskip

\noindent \textbf{Dynamics in local coordinates.} Recall the existence of Darboux' coordinates $(q^i,p_i^\kappa)$ on $T_k^*Q$, and the local realizations of the closed two-forms $\Omega^{\kappa}$ in (\ref{k-symp-s-loc}). In this chart, each vector field $X_\kappa$ can be written as 
 \begin{equation}
  \label{coorxa}
   X_\kappa =(X_\kappa)^i\frac{\partial}{\partial q^i}+(X_\kappa)_i^\lambda\frac{\partial}{\partial p_i^\lambda},
\end{equation}
where $(X_\kappa)^i$ and $(X_\kappa)_i^\lambda$ are real valued function on $T^*_k Q$. 
Then, we obtain that the HDW equation
(\ref{generic0}) is equivalent to the system of equations
 \begin{equation}
  \label{11}
 \frac{\partial H}{\partial q^i}=- \sum_{\kappa=1}^k\,(X_\kappa)^\kappa_i
, \qquad \frac{\partial H}{\partial p^\kappa_i}=(X_\kappa)^i.
\end{equation}
A $k$-vector field $\mathbf{X}$ on $T^*_k Q$ lies in the kernel of the musical mapping $\flat$ in \eqref{flat} if the components of the constitutive vector fields satisfy 
\begin{equation}
(X_{\kappa})^{i}=0, \qquad \sum_{\kappa=1}^k\left(X_{\kappa}\right)_{i}^{\kappa}=0.
\end{equation}
The existence of solutions of (\ref{11})
 is guaranteed. These depend on
$n(k^2-1)$ arbitrary functions although they are not necessarily integrable,
so the number of integrability conditions imply that
the number of arbitrary functions will be less than $n(k^2-1)$. 
\bigskip

\noindent \textbf{Solutions of HDW equations.} A solution to the HDW equations (\ref{11}) is a mapping $\boldsymbol{\phi}$ from an open neighbourhood $U$ of the Euclidean space $\mathbb{R}^k$ to the base manifold $T_k^*Q$. In Darboux' coordinates $(q^i,p_i^\kappa)$, we write such a solution as 
$\boldsymbol{\phi}(\mathbf{t})=(\phi^i(\mathbf{t}),\phi^\kappa_i(\mathbf{t}))$. In accordance with this, a solution must satisfy 
\begin{equation}
\label{he20}
 \frac{\partial H}{\partial q^i}=- \frac{\partial\phi^\kappa_i} {\partial t^\kappa}
\quad , \quad \frac{\partial H} {\partial
p^\kappa_i}=\frac{\partial\phi^i}{\partial t^\kappa},
\end{equation}
where we have assumed that $\mathbf{t}=(t^\kappa)$ is in $\mathbb{R}^k$. A solution $\boldsymbol{\phi}$ of the HDW-equations
(\ref{he20}) is said to be an {admissible solution} if ${\rm
Im}\,\boldsymbol{\phi}$ is a closed embedded submanifold of $T_k^*Q$. An admissible solution of the HDW-equations is an
integral section of an integrable $k$-vector field ${\bf
X}$ \cite{RRRSV}. We say that the set $(T_k^*Q,[\Omega^{\kappa}],V,H)$  is an admissible $k$-symplectic
Hamiltonian system if all solutions of the HDW-equation are
admissible.

\subsection{Hamilton--Jacobi theory} \label{HJ-k}

Let us briefly recall the Hamilton-Jacobi theory in the $k$-symplectic setting. One can find more detailed discussions on the issue e.g. in \cite{LeDiMaSaVi10,LSV}. Consider an admissible $k$-symplectic
Hamiltonian system $(T_k^*Q,[\Omega^{\kappa}],V,H)$.  In this framework, the Hamilton-Jacobi problem is the following partial differential equation 
\begin{equation}
H\left(q^{i}, \frac{\partial W^{1}}{\partial q^{i}}, \ldots, \frac{\partial W^{k}}{\partial q^{i}}\right)=\epsilon,
\end{equation}
with $k$ number of real valued functions $W^\kappa$ on $Q$. Here, $\epsilon$ is a real constant. Let us exhibit this geometrically.

Assume that a section $\bf{X}$ is a HDW $k$-vector field on $T_k^*Q$ satisfying (\ref{HamEq-k}). 
Using a closed section $\boldsymbol{\gamma}$ of the $k$-cotangent bundle $T_k^*Q$, we are defining a section $\bf{X}^{\boldsymbol{\gamma}}$ of the $k$-tangent bundle $T^kQ$ as follows
\begin{equation} \label{X-Gamma-k}
\mathbf{X}^{\boldsymbol{\gamma}}:=T^k\boldsymbol{\pi}_Q\circ \mathbf{X} \circ \boldsymbol{\gamma}: Q \longrightarrow T^kQ. 
\end{equation}
Here, $T^k\boldsymbol{\pi}_Q$ is the $k$-th tangent mapping of the projection $\boldsymbol{\pi}_Q$, defined according to \eqref{k-lift}. Let us consider the decompositions of the $k$-vector fields $\mathbf{X}$ (defined on $T_k^*Q$) and $\mathbf{X}^{\boldsymbol{\gamma}}$ (defined on $Q$) given by 
\begin{equation}
\mathbf{X}=[X_\kappa]=(X_1,\dots,X_k), \qquad \mathbf{X}^{\boldsymbol{\gamma}}=[X_\kappa^{\boldsymbol{\gamma}}]=
(X^{\boldsymbol{\gamma}}_1,...,X^{\boldsymbol{\gamma}}_k),
\end{equation}
respectively. Notice that, in these decompositions, each $X_\kappa$ is a classical vector field on $T^*_kQ$ whereas each $X_\kappa^{\boldsymbol{\gamma}}$ is a classical vector field on $Q$. We have that 
 \begin{equation}
\mathbf{X}\circ\boldsymbol{\gamma}= [X_\kappa\circ \boldsymbol{\gamma}]=(X_1\circ\boldsymbol{\gamma},\dots,X_k\circ\boldsymbol{\gamma})
\end{equation}
By referring to the definition of $T^k\boldsymbol{\pi}_Q$ and in terms of the decompositions, the definition in \eqref{X-Gamma-k} turns out to be
\begin{equation} \label{X_k^g}
X_\kappa^{\boldsymbol{\gamma}}=T\boldsymbol{\pi}_Q \circ X_\kappa \circ \boldsymbol{\gamma}:Q\longrightarrow TQ
\end{equation}
for any $\kappa$ running from $1$ to $k$. We present a commutative diagram summarizing the discussions.
\begin{equation} \label{diagramX-Xgamma}
\xymatrix{ T^{*}_kQ
\ar[dd]^{\boldsymbol{\pi}_Q} \ar[rrrrr]^{\mathbf{X}=(X_1,\dots X_k)}&   &&& &T^kT_k^{*}Q\ar[dd]^{T^k\boldsymbol{\pi}_Q=(T\boldsymbol{\pi}_Q,\dots,T\boldsymbol{\pi}_Q)}\\
  &  & &\\
 Q\ar@/^2pc/[uu]^{\boldsymbol{\gamma}=(\gamma^1,\dots,\gamma^k)}\ar[rrrrr]_{\mathbf{X}^{\boldsymbol{\gamma}}=(X_1^{\boldsymbol{\gamma}},\dots,X_k^{\boldsymbol{\gamma}})}& && & & T^kQ \ar@/^2pc/[uu]^{{T^k\boldsymbol{\gamma}}=(T^k\gamma^1,\dots,T^k\gamma^k)}}
\end{equation} 
where $T^k\boldsymbol{\gamma}$ is the $k$-tangent lift of $\boldsymbol{\gamma}$.

As we previously indicated, each component $X_\kappa$ of $\bf{X}$ is a vector field on $T^*_kQ$. Consider Darboux' coordinates $(q^i,p_i^\kappa)$ on $T_k^*Q$, assume that each $X_\kappa$ has a local realization as defined in (\ref{coorxa}). In this frame, each component $X_\kappa^{\boldsymbol{\gamma}}$ of the section $\mathbf{X}^{\boldsymbol{\gamma}}$ defined in (\ref{X_k^g}) can locally be computed as
 \begin{equation}
  \label{coorxa-gamma}
   X_\kappa^{\boldsymbol{\gamma}} =(X_\kappa\circ \boldsymbol{\gamma})^i\frac{\partial}{\partial q^i}.
\end{equation}

The space $\mathrm{V}\boldsymbol{\pi}_Q$ of vertical vectors on the total space of the $k$-cotangent bundle $T^*_k Q$ are vectors in $TT^*_k Q$ lying in the kernel of the mapping $T\boldsymbol{\pi}_Q$. We have the following decomposition \cite{CaGrMaMaMuRo06} 
\begin{equation} \label{decomp}
T_{\boldsymbol{\gamma}(q)} T_k^*Q=T_{\boldsymbol{\gamma}(q)}(\operatorname{Im} \boldsymbol{\gamma}) \oplus \mathrm{V}_{\boldsymbol{\gamma}(q)} \boldsymbol{\pi}_{Q}.
\end{equation}
We generalize this decomposition to the $k$-tangent bundles as follows. Define the space of $k$-vertical vectors $\mathrm{V}^k\boldsymbol{\pi}_{Q}$ as a subbundle of the $k$-tangent bundle $T^k T_k^*Q$
\begin{equation}
\mathrm{V}^k\boldsymbol{\pi}_{Q}=\left\{\mathbf{V}\in T^k T_k^*Q: T^k \boldsymbol{\pi}_{Q}(\mathbf{V})=0  \right\}.
\end{equation}
Notice that we can write the space of $k$-vertical vectors $\mathrm{V}^k\boldsymbol{\pi}_{Q}$ as the Whitney sum of the space of vertical vectors $\mathrm{V}\boldsymbol{\pi}_Q$ that is
\begin{equation}
\mathrm{V}^k\boldsymbol{\pi}_{Q}=\mathrm{V}\boldsymbol{\pi}_Q\oplus_{T_k^*Q} \dots \oplus_{T_k^*Q}  \mathrm{V}\boldsymbol{\pi}_Q.
\end{equation}
Referring to this identification, we decompose the $k$-tangent vectors on the image space of $\boldsymbol{\gamma} $ as follows
\begin{equation} \label{decomp-k}
T_{\boldsymbol{\gamma}(q)}^k T_k^*Q=T^k_{\boldsymbol{\gamma}(q)}(\operatorname{Im} \gamma) \oplus \mathrm{V}^k_{\boldsymbol{\gamma}(q)} \boldsymbol{\pi}_{Q}.
\end{equation}
Let us first concentrate the following lemma which is useful in the proof of some forthcoming theorems. 

\begin{lemma} \label{commute}
Consider the canonically $k$-symplectic $k$-cotangent bundle $(T^*_kQ,[\Omega^\kappa],V)$.  
Let $\mathbf{X}=[X_\kappa]$ be a $k$-vector field on $T^*_kQ$ and $\mathbf{Y}=[Y_\kappa]$ be a $k$-vector field on $Q$. Assume also that $\mathbf{X}-T^k\gamma(\mathbf{Y})$ lies in the kernel of the mapping  $\flat$ in (\ref{flat}). Then the following equality holds
\begin{equation} \label{commute-eqn}
\boldsymbol{\gamma}^*\sum_{\kappa=1}^k \iota_{X_\kappa}\Omega^{\kappa} =\sum _{\kappa=1}^k\iota_{Y_\kappa}(\boldsymbol{\gamma}^*\Omega^{\kappa}),
\end{equation}
where $\boldsymbol{\gamma}$ is a section of $\boldsymbol{\pi}_Q$.  
\end{lemma}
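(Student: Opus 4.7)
The plan is to show the identity pointwise: I pick an arbitrary $q\in Q$ and an arbitrary tangent vector $v\in T_qQ$, and evaluate both sides on $v$, using only the hypothesis that $\mathbf{X}-T^k\boldsymbol{\gamma}(\mathbf{Y})$ is pointwise annihilated by $\flat$ along the image of $\boldsymbol{\gamma}$.

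First, I would unpack the left-hand side via the standard pull-back formula: for each $\kappa$,
\[
\bigl(\boldsymbol{\gamma}^{*}\iota_{X_\kappa}\Omega^{\kappa}\bigr)_q(v)
=\Omega^{\kappa}_{\boldsymbol{\gamma}(q)}\bigl(X_\kappa(\boldsymbol{\gamma}(q)),\,T\boldsymbol{\gamma}(v)\bigr).
\]
By the definition of $T^k\boldsymbol{\gamma}$ in \eqref{k-lift} applied componentwise, the hypothesis reads
$X_\kappa\circ\boldsymbol{\gamma}=T\boldsymbol{\gamma}\circ Y_\kappa+Z_\kappa$, where $\mathbf{Z}=[Z_\kappa]$ is a $k$-vector along $\boldsymbol{\gamma}$ satisfying $\sum_{\kappa}\iota_{Z_\kappa}\Omega^{\kappa}=0$ at each point of $\operatorname{Im}\boldsymbol{\gamma}$.

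Second, I substitute this splitting and sum over $\kappa$. By bilinearity,
\[
\sum_{\kappa=1}^{k}\Omega^{\kappa}_{\boldsymbol{\gamma}(q)}\bigl(X_\kappa,T\boldsymbol{\gamma}(v)\bigr)
=\sum_{\kappa=1}^{k}\Omega^{\kappa}_{\boldsymbol{\gamma}(q)}\bigl(T\boldsymbol{\gamma}(Y_\kappa),T\boldsymbol{\gamma}(v)\bigr)
+\sum_{\kappa=1}^{k}\Omega^{\kappa}_{\boldsymbol{\gamma}(q)}\bigl(Z_\kappa,T\boldsymbol{\gamma}(v)\bigr).
\]
The second sum is exactly $\flat(\mathbf{Z})$ evaluated on $T\boldsymbol{\gamma}(v)$, and therefore vanishes by the kernel assumption. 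For the first sum, the functorial identity $\Omega^{\kappa}(T\boldsymbol{\gamma}\,\cdot,T\boldsymbol{\gamma}\,\cdot)=(\boldsymbol{\gamma}^{*}\Omega^{\kappa})(\cdot,\cdot)$ turns each term into $(\iota_{Y_\kappa}\boldsymbol{\gamma}^{*}\Omega^{\kappa})_q(v)$. Since $q$ and $v$ were arbitrary, this yields \eqref{commute-eqn}.

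The only conceptual point that has to be handled carefully is the interpretation of the difference $\mathbf{X}-T^k\boldsymbol{\gamma}(\mathbf{Y})$: a priori $\mathbf{X}$ is defined on $T_k^{*}Q$ while $T^k\boldsymbol{\gamma}(\mathbf{Y})$ lives on the submanifold $\operatorname{Im}\boldsymbol{\gamma}$, so the hypothesis really means that the difference at each point $\boldsymbol{\gamma}(q)$ (viewed as an element of $T^k_{\boldsymbol{\gamma}(q)}T_k^{*}Q$ via the decomposition \eqref{decomp-k}) is annihilated by $\flat$. Once this is clarified, the remainder of the argument is a direct calculation, so I do not anticipate a nontrivial obstacle. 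The only routine care needed is to keep the two different tangent maps $T\boldsymbol{\gamma}$ (componentwise) and $T^k\boldsymbol{\gamma}$ straight while applying \eqref{k-lift}.
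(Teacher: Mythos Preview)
Your proposal is correct and follows essentially the same approach as the paper's proof: both evaluate the two sides of \eqref{commute-eqn} on an arbitrary tangent vector, reduce the difference to $\flat(\mathbf{X}-T^k\boldsymbol{\gamma}(\mathbf{Y}))$ applied to a vector of the form $T\boldsymbol{\gamma}(v)$, and conclude from the kernel hypothesis. Your write-up is in fact slightly more direct, since the paper additionally invokes the verticality of the difference and the decomposition \eqref{decomp-k}, whereas you (correctly) observe that the hypothesis $\flat(\mathbf{Z})=0$ already kills the offending term outright.
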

\begin{proof}
Consider an arbitrary vector $v\in TT^*_kQ$ and compute the expression
\begin{equation} \label{gamma-rel-1}
\boldsymbol{\gamma}^*\sum_{\kappa=1}^k \iota_{X_{\kappa}}\Omega^{\kappa}(v)= \sum_{\kappa=1}^k \iota_{X_{\kappa}}\Omega^{\kappa}(T\gamma(v))= \sum _{\kappa=1}^k\Omega^{\kappa}(X_{\kappa},T\gamma(v))   
\end{equation}
and the following one
\begin{equation} \label{gamma-rel-2}
 \sum _{\kappa=1}^k\iota_{Y_\kappa}\boldsymbol{\gamma}^*\Omega^{\kappa}(v)=\sum_{\kappa=1}^k\boldsymbol{\gamma}^*\Omega^{\kappa}(Y_\kappa,v)= \sum_{\kappa=1}^k\Omega^{\kappa}(T\boldsymbol{\gamma}(Y_\kappa),T\gamma(v)). 
 \end{equation}
The expressions (\ref{gamma-rel-1}) and (\ref{gamma-rel-2}) coincide if and only if 
\begin{equation}
 \sum_{\kappa=1}^k\Omega^{\kappa}\Big(T\boldsymbol{\gamma}(Y_\kappa)-X_{\kappa},T\boldsymbol{\gamma}(v)\Big)=\flat(T^k\gamma(\mathbf{Y})-\mathbf{X})(T\boldsymbol{\gamma}(v))=0,
  \end{equation}
  for all vectors in form $T\boldsymbol{\gamma}(v)$. On the other hand $\mathbf{X}-T^k\gamma(\mathbf{Y})$ is a vertical vector field over the image space of $\boldsymbol{\gamma}$ with respect to the projection $\boldsymbol{\pi}_Q$ which reads that $\flat(\mathbf{X}-T^k\gamma(\mathbf{Y}))$ vanishes for all vertical tangent vectors with respect to $\boldsymbol{\pi}_Q$. 
Since the decomposition (\ref{decomp-k}), we deduce that \eqref{commute-eqn} holds if $\flat(\mathbf{X}-T^k\gamma(\mathbf{Y}))=0$. 
\end{proof}

Let us recall here a geometric version of the Hamilton-Jacobi formalism for $k$-symplectic framework, see \cite{LeDiMaSaVi10,LSV}. In the present work, we exhibit the proof in a coordinate free way.

\begin{theorem} \label{thm-k-symp-HJ}
Let $\bf{X}$ be an integrable solution of the HDW equation (\ref{generic0}) and $\boldsymbol{\gamma}$ be a closed section of $\boldsymbol{\pi}_Q$, then the following statements are equivalent:
\begin{itemize}
\item[1.] If $\sigma:U\subset \mathbb{R}^k\mapsto Q$ is an integral section of $\bf{X}^{\boldsymbol{\gamma}}$ then $\boldsymbol{\gamma}\circ\sigma$ is a solution $\bf{X}$.
\item[2.] $\mathbf{X} \circ \boldsymbol{\gamma}-T^{k} \boldsymbol{\gamma}(\mathbf{X}^{\boldsymbol{\gamma}}) \in\ker \flat$.
\item[3.] $d(H\circ \boldsymbol{\gamma})=0,$
\end{itemize}
where $T^{k} \boldsymbol{\gamma}$ is the $k$-tangent lift of $\boldsymbol{\gamma}$, whereas the operator $\flat$ is the one in \eqref{flat}.
\end{theorem}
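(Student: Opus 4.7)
The plan is to split the three-way equivalence into $(2)\Leftrightarrow(3)$, handled via Lemma \ref{commute} together with the closedness of $\boldsymbol{\gamma}$, and $(1)\Leftrightarrow(2)$, handled by comparing the tangent prolongation of an integral section of $\mathbf{X}^{\boldsymbol{\gamma}}$ on $Q$ with the restriction of $\mathbf{X}$ along $\boldsymbol{\gamma}$. The structural ingredients throughout are the HDW equation $\flat(\mathbf{X})=dH$, the vanishing $\boldsymbol{\gamma}^{*}\Omega^{\kappa}=0$ from (\ref{zero}), and the observation that $\mathbf{X}\circ\boldsymbol{\gamma}-T^{k}\boldsymbol{\gamma}(\mathbf{X}^{\boldsymbol{\gamma}})$ is a $\boldsymbol{\pi}_{Q}$-vertical $k$-vector field along $\boldsymbol{\gamma}(Q)$, which is immediate from the commutative diagram (\ref{diagramX-Xgamma}).

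For $(2)\Rightarrow(3)$, I would apply Lemma \ref{commute} with $\mathbf{Y}=\mathbf{X}^{\boldsymbol{\gamma}}$, whose hypothesis is exactly (2). The resulting identity reads $\boldsymbol{\gamma}^{*}\sum_{\kappa}\iota_{X_{\kappa}}\Omega^{\kappa}=\sum_{\kappa}\iota_{X_{\kappa}^{\boldsymbol{\gamma}}}\boldsymbol{\gamma}^{*}\Omega^{\kappa}$; the left-hand side equals $\boldsymbol{\gamma}^{*}dH=d(H\circ\boldsymbol{\gamma})$ by (\ref{HamEq-k}), while the right-hand side vanishes by (\ref{zero}) since $\boldsymbol{\gamma}$ is closed. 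For $(3)\Rightarrow(2)$, the target is pointwise vanishing of the one-form $\flat(\mathbf{X}\circ\boldsymbol{\gamma}-T^{k}\boldsymbol{\gamma}(\mathbf{X}^{\boldsymbol{\gamma}}))$ along $\boldsymbol{\gamma}(Q)$. Any test vector at $\boldsymbol{\gamma}(q)$ splits, via the decomposition (\ref{decomp-k}), into a vertical part and a part of the form $T\boldsymbol{\gamma}(w)$. On image-tangent vectors, the pairing reduces, by the same manipulation that proves Lemma \ref{commute}, to $\boldsymbol{\gamma}^{*}dH(w)=d(H\circ\boldsymbol{\gamma})(w)$, which vanishes by (3); on vertical test vectors, the pairing vanishes because the difference $k$-vector field is itself vertical and condition (i) of (\ref{condksymp}) forces $\Omega^{\kappa}$ to vanish on $V\times V$.

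For $(2)\Rightarrow(1)$, take any integral section $\sigma$ of $\mathbf{X}^{\boldsymbol{\gamma}}$ and set $\boldsymbol{\phi}=\boldsymbol{\gamma}\circ\sigma$. Then $\boldsymbol{\phi}_{*}(\partial/\partial t^{\kappa})=T\boldsymbol{\gamma}\circ X_{\kappa}^{\boldsymbol{\gamma}}\circ\sigma$, so the $k$-vector field along $\boldsymbol{\phi}$ induced by its partial derivatives differs from $\mathbf{X}\circ\boldsymbol{\phi}$ by exactly the restriction of $T^{k}\boldsymbol{\gamma}(\mathbf{X}^{\boldsymbol{\gamma}})-\mathbf{X}\circ\boldsymbol{\gamma}$. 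Applying $\flat$ and using (2) together with $\flat(\mathbf{X})=dH$ yields $\sum_{\kappa}\iota_{\boldsymbol{\phi}_{*}(\partial/\partial t^{\kappa})}\Omega^{\kappa}=dH\circ\boldsymbol{\phi}$, which is the intrinsic form of (\ref{he20}) and hence certifies $\boldsymbol{\phi}$ as a solution of the HDW equation. The reverse $(1)\Rightarrow(2)$ is the same calculation read backwards: along each lifted solution $\boldsymbol{\gamma}\circ\sigma$, the HDW identity pins $\flat$ of the difference to zero, and once integral sections of $\mathbf{X}^{\boldsymbol{\gamma}}$ are taken to exist through every point of $Q$ (an integrability assumption implicit in the hypotheses), the vanishing propagates to all of $\boldsymbol{\gamma}(Q)$.

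The main obstacle is $(3)\Rightarrow(2)$. In the classical symplectic setting of Theorem \ref{HJT-Thm}, the corresponding step is a direct consequence of the non-degeneracy of $\omega$, but here, for $k>1$, the operator $\flat$ of (\ref{flat}) has nontrivial kernel, so one cannot simply invert it to extract a vector-field identity from a one-form identity. This is precisely why the coordinate-free argument demands the $k$-vertical/image decomposition (\ref{decomp-k}) and a combined use of both clauses of (\ref{condksymp}), and is where the reasoning substantively departs from its classical counterpart.
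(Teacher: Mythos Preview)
Your argument is correct and, for the equivalence $(2)\Leftrightarrow(3)$, essentially identical to the paper's: both invoke Lemma~\ref{commute} for one direction and, for the converse, test $\flat(\mathbf{D})$ against the splitting (\ref{decomp-k}), using closedness of $\boldsymbol{\gamma}$ on image-tangent vectors and condition~(i) of (\ref{condksymp}) on vertical ones (you make the vertical case explicit, which the paper leaves implicit).

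The only organizational difference is that you close the cycle via $(1)\Leftrightarrow(2)$, working directly with $\flat$ along a lifted section $\boldsymbol{\gamma}\circ\sigma$, whereas the paper closes it via $(1)\Leftrightarrow(3)$, passing through $\boldsymbol{\gamma}$-relatedness and the pullback $\boldsymbol{\gamma}^{*}dH$. The underlying mechanism is the same: in both versions one compares $T^{k}\boldsymbol{\gamma}(\mathbf{X}^{\boldsymbol{\gamma}})$ with $\mathbf{X}\circ\boldsymbol{\gamma}$ along integral sections and then propagates the conclusion to all of $\boldsymbol{\gamma}(Q)$ using the existence of such sections through every point. Your route is marginally more direct, since the target statement~(2) is already phrased in terms of $\ker\flat$; the paper's route has the mild advantage of producing~(3) immediately, which is the form one typically wants in applications. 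You are also right to flag the implicit integrability of $\mathbf{X}^{\boldsymbol{\gamma}}$ needed for $(1)\Rightarrow(2)$; the paper relies on the same assumption without comment.
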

\begin{proof} To prove this theorem we first show that $(1)$ if and only $(3)$ and $(2)$ if and only if $(3)$. 

\noindent $(1)\Longrightarrow(3):$ Let $\sigma$ be an integral curve of $\bf{X}^{\boldsymbol{\gamma}}$. The condition (1) gives that 
\begin{equation}
{\sigma}_{*}(\mathbf{t})\left(\frac{\displaystyle\partial}{\partial
t^{\kappa}}\right) = X^\gamma_{a}(\boldsymbol{\sigma} (\mathbf{t})) \quad  \Rightarrow \quad 
({\boldsymbol{\gamma}\circ\sigma)}_{*}(\mathbf{t})\left(\frac{\displaystyle\partial}{\partial
t^{\kappa}}\right) = X_{\kappa}( {\boldsymbol{\gamma}\circ\sigma} (\mathbf{t}))
\end{equation}
Applying the push forward $\boldsymbol{\gamma}_*$ to the former equation we see that $X^\gamma_a$ and $X_{\kappa}$ are $\boldsymbol{\gamma}$-related over the solution hypersurfaces. So, we have
\begin{equation}
\d (H\circ\boldsymbol{\gamma})=\boldsymbol{\gamma}^*\d H=\boldsymbol{\gamma}^*\sum_{\kappa=1}^k \iota_{{X_{\kappa}}}\Omega^{\kappa}=\sum_{\kappa=1}^k i_{{X_{\kappa}^{\gamma}}}(\gamma^{\kappa})^*\Omega_Q=0.
\end{equation}

\noindent $(3)\Longrightarrow(1):$ Let us assume that $\d (H\circ\boldsymbol{\gamma})=0$ and 
$\sigma:U\subset \mathbb{R}^k\mapsto Q$ is an integral section of $\bf{X}^{\boldsymbol{\gamma}}$. We have to show that
$\boldsymbol{\gamma}\circ\sigma$ is a solution of $\bf{X}$ (a solution of Hamilton-De Donder-Weyl equation). Notice that $\boldsymbol{\gamma}\circ\sigma$ is an integral section of the $k$-symplectic vector field $\bf{X}$ defined on $\sT^*_kQ$. From Theorem \eqref{thm-k-symp-HJ},  $\bf X$ and $\mathbf{X}^{\boldsymbol{\gamma}}$ are $\gamma$-related up to the kernel of the $\flat$ map. We have to show that 
$$ \flat({\bf X})=\d H  $$
on the image of $\sigma$. From (3) we have
\begin{equation}
\begin{split}
0&=\d (H\circ\boldsymbol{\gamma})(\sigma(t))=\boldsymbol{\gamma}^*\d H(\sigma(t))=\sum_{\kappa=1}^k \iota_{{X_{\kappa}^{\gamma}}(\sigma(t))}(\boldsymbol{\gamma})^*\Omega^{\kappa}(\sigma(t))\\&=\boldsymbol{\gamma}^*\left(\sum_{\kappa=1}^k \iota_{{X_{\kappa}}}\Omega^{\kappa}(\sigma(t))\right)=\boldsymbol{\gamma}^* \flat({\bf X})(\sigma(t)),
\end{split}
\end{equation}
where the third equality comes from (\ref{zero}). Hence $\boldsymbol{\gamma}\circ\sigma$ is a solution of ${\bf X}$.

\noindent $(2)\Longrightarrow(3):$ Assume that $(2)$ holds, that is $\mathbf{X}-T^k\boldsymbol{\gamma}(\mathbf{X}^{\boldsymbol{\gamma}})$ belongs to $\ker\flat$. Then we have 
\begin{equation} \label{1-to-2}
\d (H\circ\boldsymbol{\gamma})=\boldsymbol{\gamma}^*\d H=\boldsymbol{\gamma}^*\sum_{\kappa=1}^k \iota_{{X_{\kappa}}}\Omega^{\kappa}=\sum_{\kappa=1}^k \iota_{{X_{\kappa}^{\gamma}}}\boldsymbol{\gamma}^*\Omega_Q=0  
\end{equation}
where we have used Lemma (\ref{commute}). However, it is (\ref{zero}) which provides the last equality.

\noindent $(3)\Longrightarrow (2):$ Assume that $(3)$ holds. Define a $k$-vector field $\mathbf{D}=\mathbf{X}-T^{k} \boldsymbol{\gamma} \left(\mathbf{X}^{\gamma}\right)$ on $T_k^*Q$. If the vector field $\mathbf{D}$ belongs to $\ker\flat$, we have finished the proof. For this, first see that $\mathbf{D}$ is a vertical vector field with respect to the projection $\boldsymbol{\pi}_Q$, that is
\begin{equation} \label{Disvert}
\begin{split}
T^{k}\boldsymbol{\pi}_Q\circ \mathbf{D} &=T^k\boldsymbol{\pi}_Q \Big(\mathbf{X}-T^{k}\boldsymbol{\gamma} \left(\mathbf{X}^{\gamma}\right)\Big)\\
&
=T^{k}\boldsymbol{\pi}_Q  \circ \mathbf{X} - T^{k}\pi_Q \circ T^{k} \boldsymbol{\gamma}\left(\mathbf{X}^{\gamma}\right)
\\
&=T^{k}\boldsymbol{\pi}_Q  \circ \mathbf{X} -T^{k}\boldsymbol{\pi}_Q \circ T^{k}  \boldsymbol{\gamma}\circ T^{k}\boldsymbol{\pi}_Q \circ 
\mathbf{X}  
\\
&=T^{k}\boldsymbol{\pi}_Q  \circ \mathbf{X}  -T^{k}\boldsymbol{\pi}_Q \circ 
\mathbf{X}  =0,
\end{split}
\end{equation}
where we have used that $T^{k}  \boldsymbol{\gamma}\circ T^{k}\boldsymbol{\pi}_Q$ is equal to the $k$-tangent lift of the identity mapping $\boldsymbol{\pi}_Q \circ  \boldsymbol{\gamma}$.  
Now, see that $\mathbf{D}$ vanishes identically on the image space of $\gamma$ for the vector fields on $T^*_kQ$ in the form $\boldsymbol{\gamma}_*Y$ where $Y$ is any vector field on $Q$. 
\begin{eqnarray*}
\flat(\mathbf{D})(\boldsymbol{\gamma}_*Y)&=&\sum_{\kappa=1}^k  i_{{D_a}}\Omega^{\kappa}(\boldsymbol{\gamma}_*Y)=\sum_{\kappa=1}^k  \Omega^{\kappa}(X_{\kappa}-\boldsymbol{\gamma}_*X_{\kappa}^\gamma,\boldsymbol{\gamma}_*Y)\\&=&\sum_{\kappa=1}^k  \Omega^{\kappa}(X_{\kappa},\boldsymbol{\gamma}_*Y)-\sum_{\kappa=1}^k  \Omega^{\kappa}(\boldsymbol{\gamma}_*X_{\kappa}^\gamma,\boldsymbol{\gamma}_*Y)\\
&=& \sum_{\kappa=1}^k \iota_{X_{\kappa}}\Omega^{\kappa}(\boldsymbol{\gamma}_*Y)-\sum_{\kappa=1}^k\boldsymbol{\gamma}^*(\Omega^{\kappa})(X_{\kappa}^\gamma,Y) \\ &=&\boldsymbol{\gamma}^*\sum_{\kappa=1}^k(\iota_{X_{\kappa}}\Omega^{\kappa})(Y) -\sum_{\kappa=1}^k\boldsymbol{\gamma}^*(\Omega^{\kappa})(X_{\kappa}^\gamma,Y) \\
&=&(\boldsymbol{\gamma}^*\d H)(Y) -\sum_{\kappa=1}^k\boldsymbol{\gamma}^*(\Omega^{\kappa})(X_{\kappa}^\gamma,Y)=0
\end{eqnarray*}
where we have employed condition (2) in the first term and the closedness of $\gamma$ in the second term given in the last line. It follows that $\mathbf{D}$ is in the kernel of $\flat$ that implies condition (1). 

\end{proof}

\section{Locally conformal k-symplectic framework} \label{Sec-lcks}
  
\subsection{Locally conformal k-symplectic manifolds} \label{mot-lcks}

\noindent \textbf{Motivation.} In \cite{vaisman}, one can review a first discussion on the necessity of locally conformal symplectic manifolds to study Hamiltonian dynamics. In this section, we provide a similar discussion justifying the necessity of what we call locally conformal $k$-symplectic manifolds while studying HDW dynamics on the $k$-symplectic framework. We start working on a manifold $M$ with an open covering $\{U_\alpha\}$. Assume also that on each open chart $U_\alpha$, there exist $k$ number of closed two-forms $[\Omega^{\kappa}_\alpha]=(\Omega^{1}_\alpha,\dots, \Omega^{k}_\alpha)$ satisfying the conditions in (\ref{condksymp}) for an integrable distribution $V_\alpha$. In other words, we impose that the family $(U_\alpha,[\Omega^{\kappa}_\alpha],V_\alpha)$ is a $k$-symplectic manifold. For a local Hamiltonian function $H_\alpha$ on this local $k$-symplectic space, the local HDW equations can be written as
\begin{equation} \label{Loc-k-Ham-Eq}
\sum_{\kappa=1}^k\iota_{(X_\alpha)_\kappa}\Omega_\alpha^\kappa=d H_\alpha.
\end{equation}
The solutions $(X_\alpha)_\kappa$ of this covariant equation are vector fields on $U_\alpha$, and 
the collection of these vector fields $(X_\alpha)_\kappa$ determines a $k$-vector field on $U_\alpha$ taking values in the $k$-tangent bundle $T^kU_\alpha$.  We denote this local $k$-vector field by $\mathbf{X}_\alpha:=[(X_\alpha)_\kappa]$. 
\bigskip

\noindent 
We wish to glue the local $k$-vector fields $\mathbf{X}_\alpha$ determined in \eqref{Loc-k-Ham-Eq} up to a global $k$-vector field $\mathbf{X}$ on $M$ in order to arrive at global dynamics on $M$. In this case, as it is well-know, the allowed coordinate transformations are the ones preserving the family of two-forms $[\Omega^{\kappa}_\alpha]$. Notice that, multiplying the equality \eqref{Loc-k-Ham-Eq} by a scalar $\lambda_{\beta\alpha}$, we can preserve the structure on overlapping charts $U_\alpha$ and $U_\beta$. That is, the local $k$-vector fields coincide
\begin{equation}
(X_{\kappa})_\alpha=(X_{\kappa})_\beta,
\end{equation}
if the closed two-forms $[\Omega^{\kappa}_\alpha]$ and the Hamiltonian function $H_\alpha$ obey conformal transformations 
\begin{equation} \label{Lambda}
\Omega^{\kappa}_\beta=\lambda_{\beta\alpha}\Omega^{\kappa}_\alpha, \quad\forall \kappa=1,\dots,k, \qquad H_\beta=\lambda_{\beta\alpha} H_\alpha,
\end{equation} 
respectively. This can also be recognized by applying these conformal relations directly to the local equations of motion in \eqref{he20}. It is immediate to realize that the scalars must satisfy the cocycle condition
\begin{equation} \label{cocyle}
\lambda_{\delta\beta}\lambda_{\beta\alpha} = \lambda_{\delta\alpha}.
\end{equation}
This leads us to define a real line bundle $L \mapsto M$ over the base manifold $M$. This way, one can glue up the local two-forms $\Omega_\alpha^\kappa$ to the line bundle valued two-form $\tilde{\Omega}^\kappa$ on $M$. Further, we argue that 
the family $[\Omega_\alpha^\kappa]$ of closed one-forms on $U_\alpha$ determines a family of line bundle valued two-forms $[\tilde{\Omega}^\kappa]$ on $M$.
\bigskip

\noindent 
The cocycle condition (\ref{cocyle}) manifests the existence of a family of local functions $\sigma_\alpha$ on each $U_\alpha$ satisfying the relation 
\begin{equation} \label{transition}
\lambda_{\beta \alpha}=e^{\sigma_{\alpha}}/e^{\sigma_{\beta}}=e^{-(\sigma_{\beta}-\sigma_{\alpha})}
\end{equation}
in overlapping charts. The substitution of this realization into the first equation in (\ref{Lambda}) motivates us to define local two-forms 
\begin{equation} \label{ua}
\Omega^{\kappa}\vert_\alpha:=e^{\sigma_\alpha}\Omega^{\kappa}_\alpha.
\end{equation}
Since, these local forms give $\Omega^{\kappa}\vert_\alpha=\Omega^{\kappa}\vert_\beta$ on the intersections $U_\alpha\cap U_\beta $, they can be glued to a real valued two-form $\Omega^{\kappa}$ on $M$. So, the local family $[\Omega^{\kappa}\vert_\alpha]$ determines a global real valued two-form family $[\Omega^{\kappa}]$. To sum up, we say that there are two global two-form families, namely $[\Omega^\kappa]$ (real valued) and $[\tilde{\Omega}^\kappa]$ (line bundle valued), on $M$ with local realizations $[\Omega^\kappa\vert_\alpha]$ and $[\Omega_\alpha]$, respectively. These local two-forms are related as given in (\ref{ua}).  
\bigskip

\noindent In the light of the present discussion, we can argue that if one only imposes the invariance of the HDW equations when we are glueing up the local $k$-symplectic structures, the result is not necessarily a global $k$-symplectic structure. Instead, we arrive at what we call locally conformal $k$-symplectic structures. Here we present the formal  definition.
\bigskip

\noindent \textbf{The formal definition.}
Consider a $n+kn$
dimensional manifold $M$ with $k$ number of two-forms $\Omega^{\kappa}$, and $nk$-dimensional
integrable distribution $V$. $M$ is called a locally conformal k-symplectic (l.c.k-s.) manifold if $M$ has an open cover $\{U_\alpha\}$ and there exists a family of smooth functions $\sigma_\alpha$ on each chart $U_\alpha$ providing that
\begin{equation}\label{condksymp-loc}
(i)\quad \d(e^{-\sigma_\alpha}\Omega^{\kappa}\vert_\alpha)=0, \qquad
(ii) \quad \bigcap_{\kappa=1}^{k} \ker\Omega^{\kappa}=\{0\},\qquad
(iii) \quad \Omega^{\kappa}\vert_{ V\times V}=0.
\end{equation}
Here, $\Omega^{\kappa}\vert_\alpha$ denotes the restriction of the two-form $\Omega^{\kappa}$ to the chart $U_\alpha$. So, the local two-forms defined by $\Omega^{\kappa}_\alpha:=e^{-\sigma_\alpha}\Omega^{\kappa}\vert_\alpha$ are closed two-forms on $U_\alpha$. Note that the first condition in (\ref{condksymp-loc}) is local and the other two are global. By glueing up the local condition, we arrive at the following definition of the l.c.k-s. manifolds in terms of a Lee form. 
\bigskip

\noindent
Let $M$ be a manifold of dimension $n+kn$, then $(M,[\Omega^\kappa], \theta, V)$ is called a l.c.k-s. manifold if $\theta$ is a closed one-form on $M$, $V$ is an integrable $nk$-dimensional
distribution on $M$, and each $\Omega^{\kappa}$
is a $2$-form on $M$ satisfying
\begin{equation}\label{condksymp-glo}
(i)\quad \d\Omega^{\kappa}=\theta\wedge\Omega^{\kappa}, \qquad
(ii) \quad \bigcap_{\kappa=1}^{k} \ker\Omega^{\kappa}=\{0\},\qquad
(iii) \quad \Omega^{\kappa}\vert_{ V\times V}=0.
\end{equation}
The first condition in (\ref{condksymp-glo}) is obtained by glueing up the local one-forms, i.e., on each chart, $\theta\vert_\alpha=d\sigma_\alpha$. 
Here, we call the family $([\Omega^\kappa], \theta)$ of differential forms a l.c.k-s. structure. 
Let us discuss the first condition \eqref{condksymp-glo} in the realm of the Lichnerowicz-de Rham differential presented in (\ref{LdR}). We state that a family $(M,[\Omega^\kappa], \theta, V)$ satisfying the second and the third conditions in (\ref{condksymp-glo}) is a l.c.k-s. manifold if and only if the two-forms $[\Omega^{\kappa}]$ are lying in the kernel of the  Lichnerowicz-de Rham differential $d_\theta$ for the closed one-form $\theta$ on $M$. A l.c.k-s. manifold is said to be globally conformal $k$-symplectic manifold if the Lee form is an exact one-form. 
\bigskip

\noindent \textbf{Exact l.c.k-s. structures.}  We say that a l.c.k-s. structure $([\Omega^{\kappa}],\theta)$ is exact if $\Omega^{\kappa}=d_\theta \Upsilon^\kappa$ for a family of differential one-forms $[\Upsilon^\kappa]=(\Upsilon^1,\dots, \Upsilon^k)$ on $M$. We call such a set $[\Upsilon^\kappa]$ as Liouville one-form family for the l.c.k-s. structure. We denote an exact l.c.k-s. manifold by $(M,[d_\theta \Upsilon^\kappa],\theta)$ generated by the family  $[\Upsilon^\kappa]$. In this case, we define a set of vector fields $Z_{\Upsilon^\kappa}$ by
\begin{equation} \label{Louville-v-f}
\iota_{Z_{\Upsilon^\kappa}}\Omega^\kappa=\Upsilon^\kappa, \qquad \Omega^\kappa=d_\theta\Upsilon^\kappa.
\end{equation}
Notice that this family determines a $k$-vector field $\mathbf{Z}=[Z_{\Upsilon^\kappa}]$ on $M$, which we call Liouville $k$-vector field. Applying $\iota_{Z_{\Upsilon^\kappa}}$ to the both hand side of the first equation in (\ref{Louville-v-f}) we see that $\iota_{Z_{\lambda^\kappa}}\Upsilon^\kappa=0$ where there is no sum on $\kappa$. 
\bigskip

\noindent \textbf{Lagrangian submanifolds of l.c.k-s. manifolds.} We define Lagrangian submanifolds of l.c.k-s. manifolds similar to the Lagrangian submanifolds of $k$-symplectic manifolds. Let  $N$ be a submanifold of a l.c.k-s. manifold $(M,[\Omega^{\kappa}],V)$. Then l.c.k-s. dual of the tangent space $T_zN$ is defined as the same with \eqref{sdual} whereas in this time the family $[\Omega^{\kappa}]$ does not consist of closed two-forms but the ones satisfying (\ref{condksymp-loc}). A submanifold $N$ is called isotropic if $T_zN\subset(T_zN)^\perp$ for all $z$. $N$ is called coisotropic if $(T_zN)^\perp\subset T_zN$ whereas it is called  Lagrangian if $(T_zN)^\perp= T_zN$, for all $z$. Notice that, the definition of Lagrangian submanifold only related with the second and the third conditions in (\ref{condksymp-loc}). These are the same with the ones in \eqref{condksymp}. So that one may expect that Lagrangian submanifolds of l.c.k-s. manifolds and $k$-symplectic manifolds be in an harmony. For example, as in the case of $k$-symplectic geometry, the submanifold of  $M$ integrating the distribution $V$ is also a Lagrangian submanifold of l.c.k-s. manifold. 

\subsection{L.c.k-s. structures on the Whitney sum of cotangent bundles} \label{Wslcs}

Let $Q$ be a manifold and $\vartheta$ be a closed form on it. Pull $\vartheta$ back to the cotangent bundle $T^*Q$ by means of the cotangent bundle projection $\pi_Q$ in order to define a semi-basic and closed one-form $\theta$. Recall the l.c.s. manifold $T^*_\theta Q$ determined in \eqref{T*_Q}. Define the Whitney sum of $k$ number of such bundles
\begin{equation} \label{lcks-T*Q}
T^*_{k,\theta}Q:=T^*_\theta Q\oplus_Q T^*_\theta Q \oplus_Q \dots \oplus_Q T^*_\theta Q.
\end{equation}
This product is topologically the same with the one in (\ref{Tk*Q}), and the subscript $\theta$ is placed just to remind the Lee form. 
\bigskip 

\noindent 
There exist $k$ number of projections $\pi^{\kappa}$ from $T^*_{k,\theta}Q$ to the $\kappa$-th copy of $T^*_\theta Q$ in the sum. These projections satisfy the commutative diagram in (\ref{pi-a}). Using them, we introduce $k$ number of differential two-forms 
 \begin{equation}
  \Omega^{\kappa}_\theta:=(\pi^{\kappa})^*\Omega_\theta 
\label{symforms}
 \end{equation}
on $T^*_{k,\theta}Q$. Here, $\Omega_\theta$ is the almost symplectic two-form defined in (\ref{omega_theta}). Another way to define these two-forms is as follows. Consider the closed one-form $\vartheta$ defining the Lee-form $\theta=\pi_Q^*\vartheta$ on $T^*_\theta Q$. It can be pulled back to $T^*_{k,\theta}Q$ by means of the $k$-th order cotangent bundle projection $\boldsymbol{\pi}_Q$. In this case, we arrive at a semi-basic one-form section on $T^*_{k,\theta}Q$. Referring to the commutative diagram \eqref{pi-a}, this one-form coincides with the pull-back of the Lee-form $\theta$ on $T^*_\theta Q$ by the projection $\pi^{\kappa}$, that is 
\begin{equation}
\boldsymbol{\pi}_Q^*(\vartheta)=(\pi^{\kappa})^*(\pi_Q)^*(\vartheta)=(\pi^{\kappa})^*\theta.
\end{equation}
Let us make abuse of notation we denote $\boldsymbol{\pi}_Q^*(\vartheta)$ by $\theta$ as well. 
Notice that, exterior derivatives of the two-forms $(\Omega_\theta^\kappa)$ defined in \eqref{symforms} are closed up to the Lee-form $\theta$, that is  
$$\d\Omega^{\kappa}_\theta=\d(\pi^{\kappa})^*\Omega_\theta=(\pi^{\kappa})^*\d\Omega_\theta=(\pi^{\kappa})^*(\theta\wedge\Omega_\theta)=
(\pi^{\kappa})^*\theta\wedge(\pi^{\kappa})^*\Omega_\theta=\theta\wedge\Omega^{\kappa}. $$
\bigskip

\noindent On the base manifold $Q$, consider a local coordinate system $(q^i)$. In this frame, the closed one-form  is written as $\vartheta=\vartheta_i dq^i$. Accordingly, on the Darboux' coordinates $(q^i,p^\kappa_i)$ on $T_{k,\theta}^*Q$, we have the following local realizations of the almost symplectic two-forms 
\begin{eqnarray}\label{Omega}
{\Omega^{\kappa}_\theta}=  \d p^\kappa_i\wedge\d q^i+\vartheta_i(q)p_j^\kappa\d q^i\wedge\d q^j.
\end{eqnarray}
One can see that the kernel of $\Omega^{\kappa}_\theta$ is locally generated by the $k-1$ local vector fields $\langle {\partial}/{\partial p^\lambda_i}\rangle$ such that $\lambda\neq \kappa$. Therefore, the intersection of the kernels of almost symplectic two-forms is trivial, and the second condition in (\ref{condksymp-glo}) is satisfied. Considering once more the vertical distribution $V$ exhibited in \eqref{k-symp-s-loc} 
we have that the third condition in (\ref{condksymp-glo}) is fullfilled. As a result we arrive at that the family $(T_{k,\theta}^*Q,[\Omega_\theta^\kappa],\theta,V)$ is a l.c.k-s. manifold.
\bigskip 

\noindent \textbf{$d_\vartheta$ closed Section of l.c.k-s. manifold $T_{k,\theta}^*Q$.}
Let us determine horizontal Lagrangian submanifolds of l.c.k-s. manifold $T_{k,\theta}^*Q$. For this we consider a section $\boldsymbol{\gamma}=[\gamma^\kappa]$ of the bundle $T_{k,\theta}^*Q$. We perform the  following calculation
\begin{equation}  \label{LdR-Lag}
\begin{split}
(\gamma^\kappa)^*\Omega_\theta^\kappa&=(\gamma^\kappa)^*(-d\Theta^\kappa_Q+\theta\wedge \Theta^\kappa_Q)
=-d(\gamma^\kappa)^*(\Theta^\kappa_Q)+(\gamma^\kappa)^*(\theta\wedge \Theta^\kappa_Q)
\\ &= -d\gamma^\kappa+(\gamma^\kappa)^*\theta\wedge (\gamma^\kappa)^*\Theta_Q
=-d\gamma^\kappa+\vartheta\wedge \gamma^\kappa=-d_\vartheta\gamma^\kappa
\end{split}
 \end{equation}
where we used the identities $(\gamma^\kappa)^*\Theta^\kappa_Q=(\gamma^\kappa)$ and $(\gamma^\kappa)^*\theta=\vartheta$. This gives that the image space of $\boldsymbol{\gamma}=[\gamma^\kappa]$ is an isotropic submanifold of $T_\theta^*Q$ if and only if $d_\vartheta \gamma^{\kappa}=0$ for all $\kappa$ from $1$ to $k$. We have discussed that the Lagrangian submanifolds of $k$-symplectic and l.c.k-s. manifolds are in harmony. For an another example of this, see that each two-form in $[\Omega^\kappa_\theta]$ is the sum of the canonical two-form and the wedge product of two semi-basic one-forms. This is given in \eqref{omega_theta}. So that, each $\Omega^\kappa_\theta$ vanishes on vector fields those parallel to the projection $\boldsymbol{\pi}_Q$. That is, fibers of the projection are isotropic submanifolds of l.c.k-s. manifold $T_{k,\theta}^*Q$. Another way to see this directly refer to the local realization in \eqref{Omega}. Since the tangent space to image of $\boldsymbol{\gamma}$ at each point is complement those tangent to the fibers, we argue that both the image of  $\boldsymbol{\gamma}$ and the fibers of the projection are Lagrangian submanifolds. See \cite{LV} for a similar discussions done in $k$-symplectic geometry. Since $d_\vartheta^2$ is identically zero, the image space of the one-form $d_\vartheta W$ is a Lagrangian submanifold of $T^*_\theta Q$ for a function $W$ defined on $Q$. So that, for a family of functions $(W^1,\dots, W^k)$ the set
\begin{equation}
\text{im}(d_\vartheta W^1)\oplus_Q \dots  \oplus_Q \text{im}(d_\vartheta W^k) \subset T^*Q\oplus_Q\dots  \oplus_Q T^*Q
 \end{equation} 
is a Lagrangian submanifold of $T_{k,\theta}^*Q$. 

\subsection{Dynamics in l.c.k-s. manifolds} \label{dlcks}

Recall the discussion done in the beginning of Subsection \ref{mot-lcks} regarding the globalization of the local HDW equations \eqref{Loc-k-Ham-Eq}. We have exhibited that the two-forms $\Omega^{\kappa}$ in the global picture take the local realization $\Omega^{\kappa}\vert_\alpha=e^{\sigma_\alpha}\Omega^{\kappa}_\alpha$ on a local chart $U_\alpha$. On the other hand, the collection of local Hamiltonian functions $H_\alpha$ generating the local HDW equations \eqref{Loc-k-Ham-Eq} determine a line bundle valued Hamiltonian (a "twisted Hamiltonian") function $\tilde{H}$. On the other hand, the local functions $H\vert_\alpha$, defined by 
\begin{equation} \label{glueHamFunc}
 H\vert_\alpha=e^{\sigma_{\alpha}}H_\alpha,
 \end{equation}
give a real valued Hamiltonian function $H$ on the manifold $M$. That is we have two global functions on $M$, the line bundle valued Hamiltonian function $\tilde{H}$ and real valued Hamiltonian function ${H}$. On a chart $U_\alpha$, these two functions reduce to $h_\alpha$ and $h\vert_\alpha$, respectively, and they satisfy the relation \eqref{glueHamFunc}.
\bigskip

\noindent We are now ready to glue up the local HDW equations \eqref{Loc-k-Ham-Eq}. Consider a l.c.k-s. manifold $(M,[\Omega^\kappa], \theta, V)$, and a Hamiltonian function $H$ on $M$. DWH equation in this framework is 
\begin{equation}\label{global}
\flat(\mathbf{X})=d_\theta H, 
\end{equation}
where $\d_\theta$ is the Lichnerowicz-deRham differential defined on $M$. Here, the mapping $\flat$ is defined as in (\ref{flat}).
In other words, we have
\begin{equation}\label{Sum2}
\sum_{\kappa=1}^k \iota_{X_{\kappa}}\Omega^{\kappa}=d_\theta H.
\end{equation}

\subsection{Hamilton-Jacobi theory} \label{HJ-lcks}

We will formulate now a generalization of the Hamilton-Jacobi theory for $k$-symplectic manifolds to the l.c.k-s. case. For this, consider the locally conformal $k$-symplectic manifold $T_{k,\theta}^*Q$ presented in \eqref{lcks-T*Q}. Consider a section $\boldsymbol{\gamma}$ of the fibration $\boldsymbol{\pi}_Q$.  This section can be written as a combination of $k$ number of differential one-forms $\boldsymbol{\gamma}=(\gamma^1,...,\gamma^k)$ as defined in \eqref{Gamma-gamma-a}. We wish that each $\gamma^{\kappa}$ is closed with respect to the almost symplectic two-form $\Omega^{\kappa}_\theta$ defind in (\ref{omega_theta}). In the light of the calculation done in (\ref{LdR-Lag}), this is equivalent to the closure of the sections with respect to the Lichnerowicz-deRham differential $d_\vartheta$ where $\vartheta$ is the closed one-form generating the Lee form $\theta=\pi_Q(\vartheta)$. We denote a $k$-vector field on $T^*_{k,\theta}Q$ by $\mathbf{Z}$. As in (\ref{X-Gamma-k}), using $\boldsymbol{\gamma}$, we can construct a $k$-vector field $\mathbf{Z}^\gamma$ on $Q$ by
$$ \mathbf{Z}^{\boldsymbol{\gamma}}=T^k\boldsymbol{\pi}_Q\circ \mathbf{Z} \circ \boldsymbol{\gamma},$$
where the mapping $T^k\boldsymbol{\pi}_Q$ is $k$-tangent lift of the projection and it is defined as in (\ref{diagramm}). 
Notice that, the k-vector fields $\mathbf{Z}$ and $\mathbf{Z}^\gamma$ fulfill the commutativity diagram \eqref{diagramX-Xgamma}.

\begin{theorem} \label{HJ-l.c.k-s.-1}
Let $\mathbf{Z}$ be an integrable solution of HDW equation (\ref{Sum2}) and $\boldsymbol{\gamma}$ is a closed section. Then the followings statements are equivalent:
\begin{itemize}
\item[1.] If $\sigma:U\subset \mathbb{R}^k\mapsto Q$ is an integral section of $\mathbf{Z}^\gamma$ then $\gamma\circ\sigma$ is a solution of $\mathbf{Z}$.
\item[2.] $\mathbf{Z} \circ \boldsymbol{\gamma}-T^{k} \boldsymbol{\gamma} \left(\mathbf{Z}^{\boldsymbol{\gamma}}\right) \in\ker \flat$.
\item[3.] $d_\vartheta(H\circ \boldsymbol{\gamma})=0.$
\end{itemize}
\end{theorem}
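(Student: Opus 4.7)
The plan is to mirror the structure of Theorem \ref{thm-k-symp-HJ}, adapting each step to the locally conformal setting by replacing the exterior derivative with the appropriate Lichnerowicz--de Rham differential: $d_\theta$ on $T^*_{k,\theta}Q$ (where $\theta = \boldsymbol{\pi}_Q^*\vartheta$) and $d_\vartheta$ on $Q$. As in the $k$-symplectic proof, I would show $(1)\Leftrightarrow(3)$ and $(2)\Leftrightarrow(3)$ separately.

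The key compatibility underlying everything is the computation
\begin{equation*}
\boldsymbol{\gamma}^*(d_\theta H)=\boldsymbol{\gamma}^*(dH - H\theta)=d(H\circ\boldsymbol{\gamma})-(H\circ\boldsymbol{\gamma})\,\boldsymbol{\gamma}^*\theta=d_\vartheta(H\circ\boldsymbol{\gamma}),
\end{equation*}
which uses the semi-basicity identity $\boldsymbol{\gamma}^*\theta=\vartheta$ already exploited in \eqref{LdR-Lag}. Combined with the identity $\boldsymbol{\gamma}^*\Omega_\theta^\kappa = (\gamma^\kappa)^*\Omega_\theta^\kappa = -d_\vartheta\gamma^\kappa$, also from \eqref{LdR-Lag}, the closedness hypothesis $d_\vartheta\gamma^\kappa=0$ forces $\boldsymbol{\gamma}^*\Omega^\kappa_\theta=0$ for every $\kappa$. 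I also need the l.c.k-s. version of Lemma \ref{commute}: because the proof of that lemma never invokes closedness of the two-forms, only the splitting \eqref{decomp-k} and the definition of $\flat$, its statement transfers verbatim with each $\Omega^\kappa$ replaced by $\Omega^\kappa_\theta$.

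For $(1)\Rightarrow(3)$, assuming $\sigma$ integrates $\mathbf{Z}^{\boldsymbol{\gamma}}$ and $\boldsymbol{\gamma}\circ\sigma$ solves $\mathbf{Z}$, the vector fields $Z_\kappa$ and $Z_\kappa^{\boldsymbol{\gamma}}$ become $\boldsymbol{\gamma}$-related along $\mathrm{Im}\,\boldsymbol{\gamma}$, so
\begin{equation*}
d_\vartheta(H\circ\boldsymbol{\gamma})=\boldsymbol{\gamma}^*d_\theta H=\boldsymbol{\gamma}^*\!\!\sum_{\kappa=1}^k\iota_{Z_\kappa}\Omega^\kappa_\theta=\sum_{\kappa=1}^k\iota_{Z^{\boldsymbol{\gamma}}_\kappa}(\boldsymbol{\gamma}^*\Omega^\kappa_\theta)=-\sum_{\kappa=1}^k\iota_{Z^{\boldsymbol{\gamma}}_\kappa}d_\vartheta\gamma^\kappa=0.
\end{equation*}
The reverse $(3)\Rightarrow(1)$ reruns the same chain backwards, just as in Theorem \ref{thm-k-symp-HJ}. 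For $(2)\Rightarrow(3)$, the adapted Lemma \ref{commute} rewrites $\boldsymbol{\gamma}^*\sum\iota_{Z_\kappa}\Omega^\kappa_\theta = \sum\iota_{Z^{\boldsymbol{\gamma}}_\kappa}\boldsymbol{\gamma}^*\Omega^\kappa_\theta$, which vanishes by $d_\vartheta\gamma^\kappa=0$, while the left-hand side equals $\boldsymbol{\gamma}^*d_\theta H = d_\vartheta(H\circ\boldsymbol{\gamma})$.

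The implication $(3)\Rightarrow(2)$ is the step with the most moving parts. Set $\mathbf{D}=\mathbf{Z}-T^k\boldsymbol{\gamma}(\mathbf{Z}^{\boldsymbol{\gamma}})$; the verticality argument in \eqref{Disvert} carries over word-for-word because it only uses $T^k\boldsymbol{\pi}_Q\circ T^k\boldsymbol{\gamma}=\mathrm{id}$. Then, because each $\Omega^\kappa_\theta$ is a sum of terms involving at least one $dq^i$ (cf. \eqref{Omega}), the contraction $\iota_{D_\kappa}\Omega^\kappa_\theta$ is horizontal, so $\flat(\mathbf{D})$ annihilates $\boldsymbol{\pi}_Q$-vertical vectors automatically. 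It remains to test $\flat(\mathbf{D})$ on vectors of the form $\boldsymbol{\gamma}_*Y$ with $Y\in\mathfrak{X}(Q)$; the computation parallels that in Theorem \ref{thm-k-symp-HJ},
\begin{equation*}
\flat(\mathbf{D})(\boldsymbol{\gamma}_*Y)=\bigl(\boldsymbol{\gamma}^*d_\theta H\bigr)(Y)-\sum_{\kappa=1}^k\boldsymbol{\gamma}^*\Omega^\kappa_\theta(Z^{\boldsymbol{\gamma}}_\kappa,Y)=d_\vartheta(H\circ\boldsymbol{\gamma})(Y)+\sum_{\kappa=1}^k d_\vartheta\gamma^\kappa(Z^{\boldsymbol{\gamma}}_\kappa,Y),
\end{equation*}
and both terms vanish by (3) and by the $d_\vartheta$-closedness of $\boldsymbol{\gamma}$ respectively. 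Invoking the decomposition \eqref{decomp-k} then yields $\mathbf{D}\in\ker\flat$. The main obstacle is precisely ensuring the twisted differentials intertwine correctly under $\boldsymbol{\gamma}^*$ and $(\gamma^\kappa)^*$; once this algebraic compatibility is pinned down, the geometric skeleton of the $k$-symplectic proof transfers essentially unchanged.
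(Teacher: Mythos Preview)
Your proposal is correct and follows essentially the same route as the paper's own proof: both adapt Theorem \ref{thm-k-symp-HJ} by establishing the pull-back identity $\boldsymbol{\gamma}^*(d_\theta H)=d_\vartheta(H\circ\boldsymbol{\gamma})$, using $\boldsymbol{\gamma}^*\Omega^\kappa_\theta=-d_\vartheta\gamma^\kappa=0$ from \eqref{LdR-Lag}, invoking (the l.c.k-s. analogue of) Lemma \ref{commute} for $(2)\Rightarrow(3)$, and for $(3)\Rightarrow(2)$ defining $\mathbf{D}=\mathbf{Z}-T^k\boldsymbol{\gamma}(\mathbf{Z}^{\boldsymbol{\gamma}})$, checking verticality, and testing $\flat(\mathbf{D})$ against $\boldsymbol{\gamma}_*Y$ before appealing to the splitting \eqref{decomp-k}. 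Your remark that $\flat(\mathbf{D})$ automatically kills vertical vectors because each $\Omega^\kappa_\theta$ is semi-basic in at least one slot is a helpful clarification that the paper leaves implicit.
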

\begin{proof} While proving this theorem we first show that $(1)$ and $(3)$ are in if and only if relationship then we show that $(2)$ and $(3)$ are in if and only if relationship. 

\noindent $(1)\Longrightarrow(3):$ Let $\sigma$ be an integral curve of $\mathbf{Z}^{\boldsymbol{\gamma}}$. The condition (1) gives that 
\begin{equation}
{\sigma}_{*}(\mathbf{t})\left(\frac{\displaystyle\partial}{\partial
t^{\kappa}}\right) = X^\gamma_{\kappa}(\boldsymbol{\sigma} (\mathbf{t})) \quad  \Rightarrow \quad 
({\boldsymbol{\gamma}\circ\sigma)}_{*}(\mathbf{t})\left(\frac{\displaystyle\partial}{\partial
t^{\kappa}}\right) = X_{\kappa}( {\boldsymbol{\gamma}\circ\sigma} (\mathbf{t}))
\end{equation}
Applying push forward $\boldsymbol{\gamma}_*$ to the former equation we see that $Z^\gamma_a$ and $Z_\kappa$ are $\boldsymbol{\gamma}$-related over the solution hypersurfaces. So that we have
\begin{equation}
\d_\vartheta(H\circ\boldsymbol{\gamma})=\boldsymbol{\gamma}^*\d_\theta H=\boldsymbol{\gamma}^*\sum_{\kappa=1}^k \iota_{{X_{\kappa}}}\Omega_\theta^\kappa=\sum_{\kappa=1}^k i_{{X_{\kappa}^{\gamma}}}(\gamma^{\kappa})^*\Omega_Q=0.
\end{equation}

\noindent $(3)\Longrightarrow(1):$ Let us assume that $\d_\vartheta(H\circ\boldsymbol{\gamma})=0$ and 
$\sigma:U\subset \mathbb{R}^k\mapsto Q$ is an integral section of $\bf{Z}^{\boldsymbol{\gamma}}$. We have to show that
$\boldsymbol{\gamma}\circ\sigma$ is a solution of $\bf{Z}$ (a solution of Hamilton-De Donder-Weyl equation). Let us notice that $\boldsymbol{\gamma}\circ\sigma$ is an integral section of the $k$-symplectic vector field $\bf{Z}$ defined on $T^*_{k,\theta}Q$. Referring to Theorem \eqref{HJ-l.c.k-s.-1},  $\bf Z$ and $\mathbf{Z}^{\boldsymbol{\gamma}}$ are $\gamma$-related up to kernel of $\flat$. We have to show that 
$$ \flat({\bf Z})=\d_\vartheta H  $$
on the image of $\sigma$. From (3) we have
\begin{equation}
\begin{split}
0&=\d_\vartheta(H\circ\boldsymbol{\gamma})(\sigma(t))=\boldsymbol{\gamma}^*\d_\theta H(\sigma(t))=\sum_{\kappa=1}^k \iota_{{X_{\kappa}^{\gamma}}(\sigma(t))}(\boldsymbol{\gamma})^*\Omega_\theta^\kappa(\sigma(t))\\&=\boldsymbol{\gamma}^*\sum_{\kappa=1}^k \iota_{{X_{\kappa}}}\Omega_\theta^\kappa(\sigma(t))=\boldsymbol{\gamma}^* \flat({\bf Z})(\sigma(t)),
\end{split}
\end{equation}
where the third equality comes from (\ref{zero}). Hence $\boldsymbol{\gamma}\circ\sigma$ is a solution of ${\bf Z}$.

\noindent $(2)\Longrightarrow (3):$ Notice that the second condition (3) can be rewritten as $\boldsymbol{\gamma}^*(d_\theta H)=0$ according to the following calculation 
\begin{equation} \label{Calc-d-Ld}
\begin{split}
d_\vartheta(H\circ \boldsymbol{\gamma})&=d(H\circ \boldsymbol{\gamma})-(H\circ \boldsymbol{\gamma})\vartheta
=d(\boldsymbol{\gamma}^* H)-\boldsymbol{\gamma}^*H\vartheta
\\
&=\boldsymbol{\gamma}^*dH-\boldsymbol{\gamma}^*H\boldsymbol{\gamma}^*\theta=\boldsymbol{\gamma}^*(dH-H\theta)
=\boldsymbol{\gamma}^*(d_\theta H),
\end{split}
\end{equation}
where we used the identity 
\begin{equation}
\boldsymbol{\gamma}^*\theta=\boldsymbol{\gamma}^*\boldsymbol{\pi}_Q^*\vartheta= (\boldsymbol{\pi}_Q \circ \boldsymbol{\gamma})^*\vartheta
=\vartheta
\end{equation}
in the second line of the calculation (\ref{Calc-d-Ld}). 
Assume now that the first condition holds, then Lemma (\ref{commute}) leads us to compute
\begin{equation}
\begin{split}
\boldsymbol{\gamma}^*(d_\theta H)&=\boldsymbol{\gamma}^*\sum_{\kappa=1}^k\iota_{Z_\kappa}\Omega^{\kappa}_\theta=\sum_{\kappa=1}^k\iota_{Z_\kappa^\gamma}\boldsymbol{\gamma}^*\Omega^{\kappa}_\theta \\ &=\sum_{\kappa=1}^k\iota_{Z_\kappa^\gamma}\boldsymbol{\gamma}^*(\pi^{\kappa})^*\Omega_\theta
=
\sum_{\kappa=1}^k\iota_{Z_\kappa^\gamma}( \pi^{\kappa} \circ \boldsymbol{\gamma})^*\Omega_\theta\\ &=\sum_{\kappa=1}^k\iota_{Z_\kappa^\gamma}( \gamma^{\kappa})^*\Omega_\theta= 0,
\end{split}
\end{equation}
where the last equality comes from the calculation in (\ref{LdR-Lag}) stating that $\boldsymbol{\gamma}^*\Omega^{\kappa}_\theta=0$ equivalent to $d_\theta \boldsymbol{\gamma}=0$.
\bigskip

\noindent $(3)\Longrightarrow (2):$ To prove that, same as in $k$-symplectic case, we define a $k$-vector field $\mathbf{D}=\mathbf{Z}-T^{k} \boldsymbol{\gamma}\left(\mathbf{Z}^{\boldsymbol{\gamma}}\right)$ on $T_{k,\theta}^*Q$. Our goal is to show that $\mathbf{D}$ belongs to $\ker\flat$ when it is restricted on the image space of $\boldsymbol{\gamma}$. For this end,  see that $\mathbf{D}$ is a vertical $k$-vector field over the image space of $\boldsymbol{\gamma}$ with respect to the projection $\boldsymbol{\pi}_Q$ that is
\begin{eqnarray*}
T^{k}\boldsymbol{\pi}_Q\circ \mathbf{D} \circ \boldsymbol{\gamma} =0.
\end{eqnarray*}
This is due to the calculation (\ref{Disvert}). Now, consider a vector field $Y$ on $Q$. By pushing $Y$ forward by $\boldsymbol{\gamma}$, we arrive at a vector field $\boldsymbol{\gamma}_*Y$ on $T^*_{k,\theta}Q$. We compute the following
\begin{equation}
\begin{split}
\flat(\mathbf{D})(\boldsymbol{\gamma}_*Y)&=\sum_{\kappa=1}^k  \iota_{D_a}\Omega_\theta^\kappa(\boldsymbol{\gamma}_*Y)=\sum_{\kappa=1}^k  \Omega_\theta^\kappa(X_{\kappa}-\boldsymbol{\gamma}_*X_{\kappa}^\gamma,\boldsymbol{\gamma}_*Y)\\
&=\sum_{\kappa=1}^k  \Omega_\theta^\kappa(X_{\kappa},\boldsymbol{\gamma}_*Y)-\sum_{\kappa=1}^k  \Omega_\theta^\kappa(\gamma_*X_{\kappa}^\gamma,\boldsymbol{\gamma}_*Y) \\
&=\sum_{\kappa=1}^k \iota_{X_{\kappa}}\Omega_\theta^\kappa(\boldsymbol{\gamma}_*Y)-\sum_{\kappa=1}^k\boldsymbol{\gamma}^*(\Omega_\theta^\kappa)(X_{\kappa}^\gamma,Y)\\ &=\boldsymbol{\gamma}^*\left(\sum_{\kappa=1}^k\iota_{X_{\kappa}}\Omega_\theta^\kappa\right)(Y) -\sum_{\kappa=1}^k\boldsymbol{\gamma}^*(\Omega_\theta^\kappa)(X_{\kappa}^\gamma,Y)\\
&=(\boldsymbol{\gamma}^*\d_\theta H)(Y) -\sum_{\kappa=1}^k\boldsymbol{\gamma}^*(\Omega_\theta^\kappa)(X_{\kappa}^\gamma,Y)=0
\end{split}
\end{equation}
where we have employed the second condition for the first term at the last line, whereas we employed the closure of $\gamma$  for the second term at the last line. Observing the decomposition of the $k$-vector spaces given by (\ref{decomp-k}), we arrive at that $\mathbf{D}$ is in the kernel of $\flat$ this gives the condition (2).

\end{proof}

\section{An example} \label{Exp}
Let $Q=\mathbf{R}^2-\{\mathbf{0}\}$ be the two dimensional punctured Euclidean space. Here, $\mathbf{0}$ denotes the origin. On this manifold $Q$, consider the following one-form
\begin{equation} \label{Lee-ex}
\vartheta=2\frac{xdy-ydx}{x^2+y^2}
\end{equation}
which is closed but not exact. Consider the $k$-cotangent bundle $T^*_kQ$ over the base manifold $Q$ equipped with the canonical projection $\boldsymbol{\pi}_Q$ and the local coordinates $(x,y,p^\kappa_x,p^\kappa_y)$ where $\kappa$ runs from $1$ to $k$. We pull the one-form $\vartheta$ in \eqref{Lee-ex}  back to $T^*_kQ$ with the projection $\boldsymbol{\pi}_Q$. This results in a one-form $\theta=(\boldsymbol{\pi}_Q)^*\vartheta$ on the $k$-cotangent bundle $T^*_kQ$. The semi-basic character of $\theta$ manifests that its  local realization is the same as the one-form in \eqref{Lee-ex}. 
We define a $k$ number of one-forms and a $k$ number of closed two-forms on $T^*_kQ$ by pulling back the canonical one forms and the symplectic two-forms on each component of $T^*_kQ$, so that we have 
\begin{equation} \label{kappa-forms}
\theta^\kappa=(\pi^\kappa)^*(\theta_Q)=p^\kappa_xdx+dp^\kappa_ydy, \qquad \Omega^\kappa=(\pi^\kappa)^*(\Omega_Q)=dx\wedge dp^\kappa_x+dy\wedge dp^\kappa_y.
\end{equation}
It is immediate to observe that minus the exterior derivative of $\theta^\kappa$ is precisely $\Omega^\kappa$ for all $\kappa$ running from $1$ to $k$. By means of the differential forms exhibited in (\ref{kappa-forms}), we are defining the following family of two-forms 
\begin{equation} \label{Ex-Omega}
\Omega^\kappa_\theta=\Omega^\kappa+\theta\wedge \theta^\kappa= dx\wedge dp^\kappa_x+dy\wedge dp^\kappa_y - 2\frac{yp^\kappa_y+xp^\kappa_x}{x^2+y^2}dx\wedge dy,
\end{equation}
where $\theta$ is the one-form defined as $(\boldsymbol{\pi}_Q)^*\vartheta$. After a direct calculation it is easy to  observe that the set $[\Omega^\kappa_\theta]$ satisfies the first and the second condition in \eqref{condksymp-glo}. To see that this family determines a l.c.k-s. structure we define the integrable distribution 
\begin{equation} \label{loc-intdis}
V=\left \langle \frac{\partial}{\partial p^\kappa_x},  \frac{\partial}{\partial p^\kappa_y}\right \rangle.
\end{equation}  
Referring to this, observation easily validates the third condition in \eqref{condksymp-glo}.
\bigskip

\noindent 
Now we introduce the following quadratic Hamiltonian function 
\begin{equation} \label{Ex-ham-func}
H=\sum_{\kappa=1}^k\frac{1}{2}\left((p^\kappa_x)^2+(p^\kappa_y)^2\right)
\end{equation}
on $T_k^*Q$ and recall the HDW  equation (\ref{Sum2}) in the k.l.c.s. framework defined in terms of the Lichnerowicz-deRham differential $d_\theta$. We compute the Lichnerowicz-deRham differential of the Hamiltonian function as follows 
\begin{equation}
\d_\theta H=\d H-H \theta= \sum_{\kappa=1}^k {p^\kappa_x}\d{p^\kappa_x}+{p^\kappa_y}\d{p^\kappa_y}  - \sum_{\kappa=1}^k\frac{1}{2}((p^\kappa_x)^2+(p^\kappa_y)^2) \frac{xdy-ydx}{x^2+y^2}.
\end{equation}
For a HDW $k$-vector field ${\bf X}=(X_1,...,X_k)$, the HDW equation (\ref{Sum2}) determines the constitutive vector fields 
\begin{equation}
X_\kappa=A_\kappa\frac{\partial}{\partial {x}} + B_\kappa \frac{\partial}{\partial {y}} + 
(C_\kappa)^\lambda \frac{\partial}{\partial {{p}^\lambda_x}}
+ (D_\kappa)^\lambda \frac{\partial}{\partial {p^\lambda_y}},
\end{equation}
with the following explicit computations
\begin{equation}
\begin{split}
&A_\kappa=p^\kappa_x,\quad B_\kappa=p^\kappa_y,\quad 
\sum_{\kappa=1}^k (C_\kappa)^\kappa=\sum_{\kappa=1}^k\frac{y(p^\kappa_y)^2-y(p^\kappa_x)^2+2xp^\kappa_xp^\kappa_y}{x^2+y^2}, \\
&\sum_{\kappa=1}^k (D_\kappa)^\kappa=\sum_{\kappa=1}^k\frac{x(p^\kappa_y)^2-x(p^\kappa_x)^2-2yp^\kappa_xp^\kappa_y}{x^2+y^2}.  
\end{split} 
\end{equation}
\bigskip

\noindent 
Let us now apply the global HJ theorem \eqref{HJ-l.c.k-s.-1} to the present example. Start with a family of one-forms 
\begin{equation}
\gamma^\kappa=\beta^\kappa(x,y)dx+\rho^\kappa(x,y)dy
\end{equation}
satisfying $d_\vartheta \gamma^\kappa=0$ for each $\kappa=1,...,k$ in order to guarantee that the image space of $\boldsymbol{\gamma}$ is a Lagrangian submanifold of the l.c.k-s. manifold $T^*_{k,\theta} Q$ equipped with the two-forms $[\Omega^\kappa_\theta]$ in (\ref{Ex-Omega}). That is, we have 
\begin{equation}
\frac{\partial \rho^\kappa}{\partial x} - \frac{\partial \beta^\kappa}{\partial y} + \frac{2x\beta^\kappa-2y\rho^\kappa}{x^2+y^2}=0, \qquad  \forall \kappa=1,\dots,k.
 \end{equation}
Referring to the third condition in the HJ theorem \eqref{HJ-l.c.k-s.-1}, we write the HJ equation as 
\begin{equation}
d_\vartheta(H\circ \gamma)=\frac{1}{2}d\Big(\sum_{\kappa=1}^k (\beta^\kappa)^2+(\rho^\kappa)^2\Big)-\Big( \sum_{\kappa=1}^k(\beta^\kappa)^2+(\rho^\kappa)^2\Big)\left( \frac{xdy-ydx}{x^2+y^2}\right)=0.
\end{equation}
Explicitly, we compute the following system of equations
\begin{equation}
\begin{split}
\sum_{\kappa=1}^k \Big(\beta^\kappa\frac{\partial \beta^\kappa}{\partial x} + \rho^\kappa\frac{\partial \rho^\kappa}{\partial x} + \Big( (\beta^\kappa)^2+(\rho^\kappa)^2\Big)\frac{y}{x^2+y^2} \Big) &=0,
\\
\sum_{\kappa=1}^k \Big(\beta^\kappa\frac{\partial \beta^\kappa}{\partial y} + \rho^\kappa\frac{\partial \rho^\kappa}{\partial y} - \Big( (\beta^\kappa)^2+(\rho^\kappa)^2\Big)\frac{x}{x^2+y^2}\Big) &=0.
\end{split}
\end{equation}
\bigskip

\noindent 
We write this system in a local coordinate chart. For it, we choose the polar coordinates $x=r\cos\phi$, $y=r\sin\phi$ on an open chart in $Q$, and the induced coordinates $(r,\phi, p^\kappa_r, p^\kappa_\phi)$ on the open chart $U_\alpha$. In these coordinates, the Lee form $\theta$ turns out to be an exact one-form $d\sigma_\alpha$ for $\sigma_\alpha=2\phi$ whereas the two-form in (\ref{Ex-Omega}) reduces to
\begin{equation}
\Omega_\theta^\kappa\vert_\alpha=dr\wedge dp^\kappa_r+d\phi\wedge dp^\kappa_\phi-2p^\kappa_rdr\wedge d\phi.
\end{equation}
Notice that this two-form is not closed but the following one, which is defined according to the formula in (\ref{ua}), 
 \begin{equation}
\Omega^\kappa_\alpha=e^{-\sigma_\alpha}\Omega^\kappa_\theta\vert_\alpha=e^{-2\phi} \Omega^\kappa_\theta\vert_\alpha=e^{-2\phi}\Big(dr\wedge dp^\kappa_r+d\phi\wedge dp^\kappa_\phi-2p^\kappa_rdr\wedge d\phi\Big)
\end{equation}
is closed. One can check that the set $(U_\alpha,[\Omega^\kappa_\alpha],V_\alpha)$ determines a $k$-symplectic manifold. Here, $V_\alpha$ is the restriction of the integrable distribution $V$ presented in \eqref{loc-intdis}. In the local chart, $V_\alpha$ is spanned by tangent vectors
\begin{equation} \label{loc-intdis2}
V=\left \langle \left(\cos\phi\frac{\partial}{\partial p^\kappa_r} -r\sin\phi \frac{\partial}{\partial p^\kappa_\phi} \right) ,  \left(\sin\phi\frac{\partial}{\partial p^\kappa_r} + r\cos\phi \frac{\partial}{\partial p^\kappa_\phi}\right) \right \rangle.
\end{equation} 
The Hamiltonian function (\ref{Ex-ham-func}) can be written as \begin{equation}
H\vert_\alpha=\frac{1}{2}\sum_{\kappa=1}^k \left( (p^\kappa_{r})^2+\frac{1}{r^2}(p^\kappa_\phi)^2 \right)
\end{equation}
and then, according to (\ref{glueHamFunc}), define the local function 
\begin{equation}
H_\alpha=e^{-\sigma_\alpha}H\vert_\alpha=\frac{1}{2}e^{-2\phi}\sum_{\kappa=1}^k \left( (p^\kappa_{r})^2+\frac{1}{r^2}(p^\kappa_\phi)^2  \right)
\end{equation}
on the local chart $U_\alpha$.  We find the Hamiltonian dynamics generated by the local HDW equation 
\begin{equation}
\sum_{\kappa=1}^k\iota_{{(X_\kappa)}_\alpha}\Omega^\kappa_\alpha=dH_\alpha.
\end{equation}
For a vector field
$$X_\kappa=A_\kappa\partial_r+B_\kappa\partial_\phi+(C_\kappa)^\lambda\partial_{p^\lambda_r}+(D_\kappa)^\lambda\partial_{p^\lambda_\phi}$$
we obtain the coefficients
\begin{equation}
A_\kappa=p^\kappa_r, \quad B_\kappa=\frac{1}{r^2}p^\kappa_\phi, \quad \sum^k_{\kappa=1}(C_\kappa)^\kappa=\sum^k_{\kappa=1}\frac{1}{r^2}{p^\kappa_\phi}(2 p^\kappa_r+ \frac{1}{r}p^\kappa_\phi), \quad \sum^k_{\kappa=1}(D_\kappa)^\kappa=\sum^k_{\kappa=1}-({p^\kappa_r})^2+\frac{1}{r^2}(p^\kappa_\phi)^2.
\end{equation}
Now, we are in the realm of the Hamilton-Jacobi theorem (\ref{thm-k-symp-HJ}) valid for $k$-symplectic framework. So, we consider a closed section $\boldsymbol{\gamma}_\alpha=(\gamma_\alpha^1,...\gamma_\alpha^k)$ 
which in local coordinates becomes
$$\gamma^\kappa_\alpha=\xi^\kappa(r,\phi)dr+\eta^\kappa(r,\phi)d\phi.$$
The closure of $\gamma$ implies that for each $\kappa$
\begin{equation}\label{lagsub2}
0=(\gamma^\kappa_\alpha)^*\Omega^\kappa_\alpha=  
e^{-2\phi}\Big(\frac{\partial\xi^\kappa}{\partial\phi}- \frac{\partial\eta^\kappa}{\partial r}-2\xi^\kappa \Big)\d r\wedge\d\phi.
\end{equation}
%$$ \gamma^*\Omega_\alpha=e^{-2\phi} \Omega_\theta\vert_\alpha=e^{-2\phi}\Big(dr\wedge d\xi+d\phi\wedge d\eta-2\xi dr\wedge d\phi\Big)=   $$
%$$  e^{-2\phi}\Big(\frac{\partial\xi}{\partial\phi}dr\wedge d\phi+d\phi\wedge \frac{\partial\eta}{\partial r}\d r-2\xi dr\wedge d\phi\Big)=   
%e^{-2\phi}\Big(\frac{\partial\xi}{\partial\phi}- \frac{\partial\eta}{\partial r}-2\xi \Big)
%$$
Now, we calculate $\d (H_\alpha\circ \gamma_\alpha)$ and expect that it vanishes identically. Then, the HJ equations would be 
\begin{equation}
\begin{split}
\sum_{\kappa=1}^k \left(\xi^\kappa\frac{\partial\xi^\kappa}{\partial\phi}+ \eta^\kappa\frac{1}{r^2}\frac{\partial\eta^\kappa}{\partial\phi}-\xi^2-\frac{1}{r^2}(\eta^\kappa)^2\right) =0 \\ \sum_{\kappa=1}^k \left(\xi^\kappa\frac{\partial\xi^\kappa}{\partial r}+  \eta^\kappa\frac{1}{r^2}\frac{\partial\eta^\kappa}{\partial r} -\frac{1}{r^3}(\eta^\kappa)^2\right)=0.
\end{split}
\end{equation}

One can sum these two equations, and simplify the expression to

\begin{equation}
\frac{1}{2}d\left((\xi^{\kappa})^2+\frac{1}{r^2}(\eta^{\kappa})^2\right)=(\xi^{\kappa})^2+\frac{1}{r^2}(\eta^{\kappa})^2
\end{equation}
where ``$d$" denotes total derivative with respect to $(r,\phi)$.

\section{Conclusion and future work}

In this work, we have addressed the glueing problem of HDW equations defined on local $k$-symplectic structures. Accordingly, we have introduced the notion of locally conformal $k$-symplectic (l.c.k-s.) manifolds in Subsection \ref{mot-lcks}. 

In Subsection \ref{Wslcs}, we exhibited the Whitney sum of a $k$ number of cotangent bundles equipped with locally conformal symplectic structures and a HDW equation for this framework has been studied in Subsection \ref{dlcks}. Further, we have depicted the HJ formalism for the HDW theory. To this end, we have first presented a coordinate free proof of the HJ theorem on $k$-symplectic spaces in Subsection \ref{HJ-k}, then we have taken this discussion to l.c.k-s. manifolds in Subsection \ref{HJ-lcks} and we have provided an example. We plan on pursuing investigations on locally conformal $k$-symplectic (l.c.k-s.) manifolds in the following headlines:

\begin{itemize}
\item The reduction of locally conformal symplectic (l.c.s.) manifolds under Lie groups has been studied by many authors see, for example, \cite{HaRy01,St19a, St19b}, whereas reduction on $k$-symplectic manifolds has been given, for example, in \cite{Blaga09,MuReSa04}. In similar fashion, it could be interesting to study the reduction of the HDW equation (\ref{Sum2}) in the l.c.k-s. framework under Lie group symmetry.   

\item The Hamiltonian realization of field theories is also available in the multisymplectic framework \cite{CaIbLeon96,CaIbLeon99,Go91,Ro09}. We find interesting to discuss the glueing problem of local Hamiltonian dynamics in this framework.
\end{itemize}

\section{Acknowledgements}
This work has been partially supported by MINECO grants MTM2016-76-072-P and the ICMAT Severo Ochoa Project SEV-2011-0087 and SEV-2015-0554.

\end{document}